\newcommand{\avg}[2]{\mathop{\textbf{E}}_{#1}[#2]}
\newcommand{\poly}[1]{\mathop{\mathrm{poly}}(#1)}
\newcommand{\bra}[1]{\{#1\}}
\newcommand{\setcond}[2]{\left\{#1\: \middle|\: #2\right\}}
\newcommand{\NC}{\mbox{\rm NC}}
\newcommand{\AC}{\mbox{\rm AC}}
\newcommand{\field}{\mathbb{F}}
\newcommand{\complex}{\mathbb{C}}
\newcommand{\naturals}{\mathbb{N}}
\newcommand{\integer}{\mathbb{Z}}
\newcommand{\intersect}{\cap}
\newcommand{\mcG}{\mathcal{G}}
\newcommand{\mcH}{\mathcal{H}}
\newcommand{\mcK}{\mathcal{K}}
\newcommand{\mc}[1]{\mathcal{#1}}
\newcommand{\Sym}{\mathrm{Sym}}
\newcommand{\F}{\mathbb{F}}
\newcommand{\squishlist}{
 \begin{list}{$\bullet$}
  { \setlength{\itemsep}{0pt}
     \setlength{\parsep}{3pt}
     \setlength{\topsep}{3pt}
     \setlength{\partopsep}{0pt}
     \setlength{\leftmargin}{1.5em}
     \setlength{\labelwidth}{1em}
     \setlength{\labelsep}{0.5em} } }
\newcommand{\squishlisttwo}{
 \begin{list}{$\bullet$}
  { \setlength{\itemsep}{0pt}
     \setlength{\parsep}{0pt}
    \setlength{\topsep}{0pt}
    \setlength{\partopsep}{0pt}
    \setlength{\leftmargin}{2em}
    \setlength{\labelwidth}{1.5em}
    \setlength{\labelsep}{0.5em} } }
\newcommand{\squishend}{
  \end{list}  }
\begin{document}

\title{The Remote Point Problem, Small Bias Spaces, and Expanding
Generator Sets}
\author{V.~Arvind}{V.~Arvind}
\author{S.~Srinivasan}{Srikanth Srinivasan}
\address{%
The Institute of Mathematical Sciences, 
\newline C.I.T. Campus, Chennai  600 113, India.}
\email[V.~Arvind]{arvind@imsc.res.in} 
\email[Srikanth Srinivasan]{srikanth@imsc.res.in} 

\keywords{Small Bias Spaces, Expander Graphs, Cayley Graphs, Remote Point Problem.}
\subjclass{Algorithms and Complexity Theory}

\begin{abstract}
  Using $\varepsilon$-bias spaces over $\F_2$, we show that the Remote
  Point Problem (RPP), introduced by Alon et al \cite{APY09}, has an
  $\NC^2$ algorithm (achieving the same parameters as \cite{APY09}).
  We study a generalization of the Remote Point Problem to groups: we
  replace $\F_2^n$ by $\mcG^n$ for an arbitrary fixed group $\mcG$.
  When $\mcG$ is Abelian we give an $\NC^2$ algorithm for RPP, again
  using $\varepsilon$-bias spaces. For nonabelian $\mcG$, we give a
  deterministic polynomial-time algorithm for RPP. We also show the
  connection to construction of expanding generator sets for the group
  $\mcG^n$. All our algorithms for the RPP achieve essentially the same
  parameters as \cite{APY09}.
\end{abstract}

\maketitle

\section{Introduction}

Valiant, in his celebrated work \cite{Val} on circuit lower bounds for
computing linear transformations $A:\F^n\longrightarrow \F^m$ for a
field $\F$, initiated the study of rigid matrices. If explicit rigid
matrices of certain parameters can be constructed it would result in
superlinear lower bounds for logarithmic depth linear circuits over
$\F$. This problem and the construction of such rigid matrices has
remained elusive for over three decades.

Alon, Panigrahy and Yekhanin \cite{APY09} recently proposed a problem
that appears to be of intermediate difficulty. Given a subspace $L$ of
$\F_2^n$ by its basis and a number $r\in [n]$ as input, the problem is
to compute in deterministic polynomial time a point $v\in\F_2^n$ such
that $\Delta(u,v)\geq r$ for all $u\in L$, where $\Delta(u,v)$ is the
Hamming distance. They call this the \emph{Remote Point Problem}.  The
point $v$ is said to be $r$-far from the subspace $L$.

Alon et al \cite{APY09} give a nice polynomial time-bounded (in $n$)
algorithm for computing a $v\in \F_2^n$ that is $c\log n$-far from a
given subspace $L$ of dimension $n/2$ and $c$ is a fixed constant. For
$L$ such that $\dim(L)=k<n/2$ they give a polynomial-time algorithm
for computing a point $v\in \F_2^n$ that is $\frac{cn\log k}{k}$-far
from $L$.

\subsubsection*{\bf Results of this paper}

In \cite{AS09a} we recently investigated the problem of proving circuit
lower bounds in the presence of help functions. Specifically, one of
the problems we consider is proving lower bounds for constant-depth
Boolean circuits which can take a given set of (arbitrary) help
functions $\{h_1,h_2,\cdots,h_m\}$ at the input level, where
$h_i:\{0,1\}^n\longrightarrow \{0,1\}$ for each $i$. Proving explicit
lower bounds for this model would allow us to separate EXP from the
polynomial-time many-one closure of nonuniform $\AC^0$. We show that
it suffices to find a polynomial-time solution to the Remote Point
Problem for parameters $k=2^{(\log\log n)^c}$ and
$r=\frac{n}{2^{(\log\log n)^d}}$ for all constants $c$ and $d$.
Unfortunately, the parameters of the Alon et al algorithm are
inadequate for our application.

However, motivated by this connection, in the present paper we carry
out a more detailed study of the Remote Point Problem as an
algorithmic question. We briefly summarize our results.

\noindent\textbf{1.}~~ The first question we address is whether we can
give a deterministic parallel (i.e.\ NC) algorithm for the problem ---
Alon et al's algorithm is inherently sequential as it is based on the
method of conditional probabilities and pessimistic estimators.

It turns out an element of an $\varepsilon$-bias space for suitably
chosen $\varepsilon$ is a solution to the Remote Point Problem which
gives us an NC algorithm quite easily.

\noindent\textbf{2.}~~ Since the RPP for $\F_2^n$ can be solved 
using small bias spaces, it naturally leads us to address the problem
in a more general group-theoretic setting.

In the generalization we study we will replace $\F_2$ with an
arbitrary fixed finite group $\mcG$ such that $|\mcG|\geq 2$. Hence we
will have the $n$-fold product group $\mcG^n$ instead of the vector
space $\F_2^n$.

Given elements $x = (x_1,x_2,\ldots,x_n), y= (y_1,y_2,\ldots,y_n)$ of
$\mcG^n$, let $\Delta(x,y)=|\{i\mid x_i\neq y_i\}|$. I.e.\
$\Delta(x,y)$ is the \emph{Hamming distance} between $x$ and
$y$. Furthermore, for $S\subseteq \mcG^n$, let $\Delta(x,S)$ denote
$\min_{y\in S}\Delta(x,y)$.

We now define the \emph{Remote Point Problem (RPP) over a finite
  group $\mcG$}. The input is a subgroup $\mcH$ of $\mcG^n$, where $\mcH$
is given by a generating set, and a number $r\in [n]$. The problem is
to compute in deterministic polynomial (in $n$) time an element
$x\in\mcG^n$ such that $\Delta(x,H)> r$. The results we show
in this general setting are the following.

\begin{itemize}
\item[(a)] The Remote Point Problem over any \emph{Abelian group}
  $\mcG$ has an $\NC^2$ algorithm for $r=O(\frac{n\log k}{k})$ and
$k\leq n/2$, where $k=\log_{|\mcG|} |\mcH|$. 

\item[(b)] Over an arbitrary group $\mcG$ the Remote point problem has
  a polynomial-time algorithm for $r=O(\frac{n\log k}{k})$ and $k\leq
  n/2$, where $k=\log_{|\mcG|} |\mcH|$.
\end{itemize}

The parallel algorithm stated in part(a) above is based on
$\varepsilon$-bias space constructions for finite Abelian groups
described in Azar et al \cite{AMN98}.  The sequential algorithm stated
in part(b) above is a group-theoretic generalization of the Alon et al
algorithm for $\F_2^n$ \cite{APY09}.

Due to lack of space, some proofs have been omitted. They may be found
in the full version which has been published as an ECCC report
\cite{AS09b}.

\section{Preliminaries}

Fix a finite group $\mcG$ such that $|\mcG|\geq 2$. Given any
$x\in\mcG^n$, let $wt(x)$ denote the number of coordinates $i$ such
that $x_i \neq 1$, where $1$ is the identity of the group $\mcG$. By
$B(r)$, we will refer to the set of $x\in\mcG^n$ such that $wt(x)\leq
r$. Given a subset $S$ of $\mcG^n$, $B(S,r)$ will denote the
set $S\cdot B(r) = \setcond{sx}{s\in S, x\in B(r)}$. Clearly, for any
$S\subseteq \mcG^n$ and any $x\in\mcG^n$,
$x\in B(S,r)$ if and only if $\Delta(x,S)\leq r$. We say that $x$ is
\emph{$r$-close} to $S$ if $x\in B(S,r)$ and \emph{$r$-far} from $S$
if $x\notin B(S,r)$.

The \emph{Remote Point Problem (RPP) over $\mcG$} is defined to be the
following algorithmic problem:

\begin{itemize}
\item[] INPUT: A subgroup $\mcH$ of $\mcG^n$ (given by its generators)
  and an $r\in\naturals$.
\item[] OUTPUT: An $x\in \mcG^n$ such that $x\notin B(\mcH,r)$.
\end{itemize}

Clearly, there are inputs to the above problem where no solution can
be found. But the input instances of the kind that we will study will
clearly have a solution (in fact, a random point of $\mcG^n$ will be a
solution with high probability). 

Given a subgroup $\mcH$ of $\mcG^n$, denote by $\delta(\mcH)$ the quantity
$\log_{|\mcG|} |\mcH|$. We will call $\delta(\mcH)$ the
\emph{dimension of $\mcH$ in $\mcG^n$}.

We say that the RPP over $\mcG$ has a $(k(n),r(n))$-algorithm  if
there is an efficient algorithm that solves the Remote Point Problem
when given as input a subgroup $\mcH$ of $\mcG^n$ of dimension at most
$k(n)$ and an $r$ that is bounded by $r(n)$. (Here, `efficient' can
correspond to polynomial time or some smaller complexity class.)

A simple counting argument shows that there is a valid solution to the
RPP over $\mcG$ on inputs $(\mcH,r)$ where $\delta(\mcH) + r\leq
n(1-\frac{H(r/n)}{\log |G|} - \varepsilon)$, for any fixed $\varepsilon
> 0$ (where $H(\cdot)$ denotes the binary entropy function). However,
the best known deterministic solution to the RPP -- from \cite{APY09}
-- is a polynomial time $(k,\frac{cn\log k}{k})$-algorithm which works
over $\field_2^n$ (i.e, the group $\mcG$ involved is the additive group
of the field $\field_2$). 
%
%
%
\subsection{Some Group-Theoretic Algorithms}
We introduce basic definitions and review some group-theoretic
algorithms. Let $\Sym(\Omega)$ denote the group of all permutations on
a finite set $\Omega$ of size $m$. In this section we use $G,H$
etc. to denote \emph{permutation groups on $\Omega$}, which are simply
subgroups of $\Sym(\Omega)$.

Let $G$ be a subgroup of $\Sym(\Omega)$. For a subset
$\Delta\subseteq\Omega$ denote by $G_{\bra{\Delta}}$ the
\emph{point-wise stabilizer} of $\Delta$. I.e $G_{\bra{\Delta}}$ is the
subgroup consisting of exactly those elements of $G$ that fix each
element of $\Delta$.

\begin{theorem}[Schreier-Sims]{\rm\cite{Lu93}}\label{schreier-sims}
\begin{enumerate}
\item If a subgroup $G$ of $\Sym(\Omega)$ is given by a generating set
  as input along with the subset $\Delta$ there is a polynomial-time
	(sequential) algorithm for computing a generator set for $G_{\bra{\Delta}}$.
\item If a subgroup $G$ of $\Sym(\Omega)$ is given by a generating set
	as input, then there is a polynomial time algorithm for computing
	$|G|$.
\item Given as input a permutation $\sigma\in\Sym(\Omega)$ and a
  generator set for a subgroup $G$ of $\Sym(\Omega)$, we can test in
  deterministic polynomial time if $\sigma$ is an element of $G$.
\end{enumerate}
\end{theorem}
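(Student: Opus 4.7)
The plan is to prove all three parts simultaneously via the classical \emph{Schreier-Sims algorithm}, which constructs a strong generating set for $G$ relative to a base. Fix an ordering $\beta_1, \ldots, \beta_m$ of $\Omega$ and let $G^{(i)}$ denote the pointwise stabilizer in $G$ of $\bra{\beta_1, \ldots, \beta_i}$, so that
\[
G = G^{(0)} \geq G^{(1)} \geq \cdots \geq G^{(m)} = \bra{1}.
\]
The goal is to compute, in time polynomial in $m$ and the size of the input generating set $S$, a generating set $S_i$ for each $G^{(i)}$ together with a right transversal $T_i$ of $G^{(i+1)}$ in $G^{(i)}$.

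The inductive step uses two ingredients. First, the orbit of $\beta_{i+1}$ under $G^{(i)}$ is computed by a breadth-first search on the Cayley graph generated by $S_i$; by the orbit-stabilizer correspondence, recording one group element that carries $\beta_{i+1}$ to each orbit point produces a right transversal $T_i$ of $G^{(i+1)}$ in $G^{(i)}$. Second, Schreier's lemma then yields the generating set
\[
S_{i+1} \;=\; \setcond{t s \cdot \overline{t s}^{-1}}{t \in T_i,\, s \in S_i}
\]
for $G^{(i+1)}$, where $\overline{g}$ denotes the coset representative of $G^{(i+1)} g$ chosen from $T_i$.

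The main technical obstacle is that naively iterating Schreier's lemma over $m$ levels multiplies the generator count by $|T_i| \leq m$ per level, producing an exponential blow-up. Sims' \emph{sifting procedure} resolves this: a candidate element $g \in G^{(i)}$ is processed level-by-level, at each stage $j \geq i$ being right-multiplied by the inverse of the transversal element of $T_j$ whose action on $\beta_j$ matches that of the running product; if sifting reduces $g$ to the identity then $g$ is redundant and discarded, and otherwise the first level at which the image $\beta_j \mapsto g(\beta_j)$ is new supplies a useful new transversal element (or generator) at that level. Filtering every Schreier generator through this procedure keeps $|S_i|,|T_i| = O(m)$ throughout, and the whole computation runs in polynomial time.

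Once a strong generating set is in hand, all three claims follow quickly. For part (3), sifting the input $\sigma$ against the transversals $T_0, T_1, \ldots$ reduces $\sigma$ to the identity if and only if $\sigma \in G$, which decides membership in polynomial time. For part (2), the orbit-stabilizer theorem gives $|G| = \prod_{i=0}^{m-1}|T_i|$, and each $|T_i|$ is already obtained as an orbit size during the construction. For part (1), given an arbitrary subset $\Delta = \bra{\alpha_1,\ldots,\alpha_\ell} \subseteq \Omega$, we simply re-run the algorithm with a base whose initial segment is $\alpha_1,\ldots,\alpha_\ell$; then $S_\ell$ is a generating set for $G_{\bra{\Delta}}$, which proves part (1).
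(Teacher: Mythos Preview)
The paper does not prove this theorem at all: it is stated as a background result with a citation to \cite{Lu93}, and the authors use it as a black box in Lemma~\ref{corollary_affine_xn}. So there is no ``paper's own proof'' to compare against.

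Your sketch of the Schreier--Sims algorithm is a correct outline of the standard argument and would suffice as a proof. One small quibble: after sifting, the bound you quote on the strong generating set is a bit too optimistic --- sifting guarantees that the total number of nonredundant generators across all levels is at most $\sum_i (|T_i|-1) \leq \binom{m}{2}$, so each $|S_i|$ is $O(m^2)$ rather than $O(m)$. This does not affect the polynomial-time conclusion. Also, your description of sifting conflates left and right actions slightly (you write $g(\beta_j)$ but multiply on the right by transversal inverses); just be consistent with one convention throughout.
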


We are also interested in a special case of this problem which we now
define. A subset $\Gamma\subseteq\Omega$ is an \emph{orbit} of $G$ if
$\Gamma = \setcond{\sigma(i)}{\sigma\in G}$ for some $i\in\Omega$. Any
subgroup $G$ of $\Sym(\Omega)$ partitions $\Omega$ into orbits (called
$G$-orbits).

For a constant $b > 0$, a subgroup $G$ of $\Sym(\Omega)$ is defined to
be a \emph{$b$-bounded permutation group} if every $G$-orbit is of
size at most $b$.

In \cite{MC87}, McKenzie and Cook studied the parallel complexity of
\emph{Abelian} permutation group problems. Specifically, they gave an
$\NC^3$ algorithm for testing membership in an Abelian permutation
group given by a generator set and for computing the order of an
Abelian permutation group. When restricted to $b$-bounded Abelian
permutation groups, the algorithms of \cite{MC87} for these problems
are actually $\NC^2$ algorithms. We formally state their result and
derive a consequence.

\begin{theorem}[\cite{MC87}]\label{thm_Luks_algo}
	There is an $\NC^2$ algorithm for membership testing in a
	$b$-bounded Abelian permutation group $G$ given by a generator set.
\end{theorem}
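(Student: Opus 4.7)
The plan is to reduce membership testing in the $b$-bounded Abelian group $G$ to solving a linear system of congruences over a ring of constant size, and then to invoke parallel linear algebra. Let $g_1, \ldots, g_m$ be the given generators of $G \le \Sym(\Omega)$ and $\sigma \in \Sym(\Omega)$ the query permutation. First I would compute the orbits $\Gamma_1, \ldots, \Gamma_t$ of $G$ on $\Omega$ in $\NC^2$: two points are in the same orbit iff they are connected in the undirected graph on $\Omega$ with edges $\{i, g_\ell(i)\}$, and undirected connectivity lies in $\L \subseteq \NC^2$. By hypothesis each $|\Gamma_j| \le b$.

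Next, for each orbit $\Gamma_j$, let $G_j$ denote the image of $G$ under the restriction homomorphism $G \to \Sym(\Gamma_j)$. Since $G$ is Abelian and transitive on $\Gamma_j$, the action is regular, so $|G_j| = |\Gamma_j| \le b$. In parallel over all orbits, I would enumerate $G_j$ explicitly (every element is a product of at most $|G_j|-1 \le b-1$ restricted generators, a constant amount of work per orbit) and fix, via the structure theorem for finite Abelian groups, an isomorphism $G_j \cong \integer/m_{j,1}\integer \times \cdots \times \integer/m_{j,s_j}\integer$ with each $m_{j,k}$ dividing the constant $e := \mathrm{lcm}(1, 2, \ldots, b)$.

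Via the natural restriction, $G$ embeds into $G_1 \times \cdots \times G_t$, and so $\sigma \in G$ iff the tuple $(\sigma|_{\Gamma_j})_j$ lies in the subgroup generated by the tuples $(g_\ell|_{\Gamma_j})_j$. Translating through the cyclic coordinates chosen above, this becomes the question: does a specified integer vector of length $N \le |\Omega|$ lie in the $\integer$-span, modulo the cyclic-factor moduli, of $m$ given integer vectors, where all entries and moduli are bounded by $e$?

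Finally I would solve this system of linear congruences over $\integer/e\integer$ using the standard $\NC^2$ algorithms for linear algebra over finite rings (Smith-normal-form or Mulmuley-style techniques), which apply here because $e = O(1)$ keeps the ring size constant. The main obstacle is exactly this last step: for an arbitrary Abelian permutation group the moduli involved can be as large as the group exponent and appear to need an extra logarithmic factor of depth, landing one in $\NC^3$ as in the general algorithm of \cite{MC87}; in the $b$-bounded case the moduli are constants, so a single application of parallel linear-algebra routines achieves the claimed $\NC^2$ bound.
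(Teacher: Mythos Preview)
The paper does not supply its own proof of this theorem; it is quoted as a result of McKenzie and Cook \cite{MC87}, with only the remark that their general $\NC^3$ algorithm for Abelian permutation groups becomes $\NC^2$ when all orbits have bounded size. Your sketch is a faithful reconstruction of why that specialization holds: bounded orbits force each restricted group $G_j$ to be regular of order at most $b$, so every cyclic modulus divides the constant $e=\mathrm{lcm}(1,\ldots,b)$, and the membership question reduces to linear algebra over a fixed finite ring, which lands in $\NC^2$.

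Two small points worth tightening. First, before restricting $\sigma$ to the orbits you must verify that $\sigma$ stabilizes each $\Gamma_j$ setwise and that $\sigma|_{\Gamma_j}\in G_j$; both checks are trivial since $|\Gamma_j|\le b$, but without them the reduction is incomplete. Second, the phrase ``standard $\NC^2$ algorithms for linear algebra over finite rings'' is doing real work, since $\integer/e\integer$ need not be a field. The clean way---and essentially what \cite{MC87} does---is to use CRT to split into prime-power moduli (all constant), handle each prime separately, and check consistency; alternatively, one phrases the problem as membership of an integer vector in a $\integer$-lattice with entries bounded by $e$ and appeals to $\NC^2$ integer lattice/Smith-normal-form computations. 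Either route works here precisely because $e=O(1)$, which is exactly the observation that collapses $\NC^3$ to $\NC^2$.
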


We now consider problems over $\mcG^n$, for a fixed finite group
$\mcG$. We know from basic group theory that every group $\mcG$ is a
permutation group acting on itself. I.e.\ every $\mcG$ can be seen as a
subgroup of $\Sym(\mcG)$, where $\mcG$ acts on itself by left (or right)
multiplication. Therefore, $\mcG^n$ can be easily seen as a
permutation group on the set $\Omega = \mcG\times [n]$ and hence, $\mcG^n$
can be considered a subgroup of $\Sym(\Omega)$. Furthermore, notice
that each subset $\mcG\times \{i\}$ is an orbit of this group
$\mcG^n$. Hence, $\mcG^n$ is a $b$-bounded permutation group contained
in $\Sym(\Omega)$, where $b=|\mcG|$. Finally, if $\mcG$ is an Abelian
group, then so is this subgroup of $\Sym(\Omega)$. We have the
following lemma as an easy consequence of Theorem~\ref{thm_Luks_algo}.
\begin{lemma}\label{corollary_gp_membership}
  Let $\mcG$ be Abelian. There is an $\NC^2$ algorithm that takes as
  input a generator set for some subgroup $\mcH$ of $\mcG^n$ and an
  $x\in \mcG^n$, and accepts iff $x\in\mcH$.
\end{lemma}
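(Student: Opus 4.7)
The plan is to reduce the membership problem in $\mcH \leq \mcG^n$ to membership testing in a $b$-bounded Abelian permutation group, and then invoke Theorem~\ref{thm_Luks_algo}. Set $\Omega = \mcG \times [n]$ and $b = |\mcG|$, which is a constant since $\mcG$ is fixed. Realize $\mcG^n$ as a subgroup of $\Sym(\Omega)$ by the faithful left-regular action: each $h = (h_1,\ldots,h_n) \in \mcG^n$ corresponds to the permutation $\tilde{h} \in \Sym(\Omega)$ defined by $\tilde{h}(g,i) = (h_i g, i)$. As noted in the preceding paragraph of the excerpt, the orbits of the resulting subgroup of $\Sym(\Omega)$ are exactly the sets $\mcG \times \{i\}$, each of size $|\mcG| = b$, so this is a $b$-bounded permutation group; and since $\mcG$ is Abelian, so is its image in $\Sym(\Omega)$.

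First, I would convert the input. Given a generator set $\{h^{(1)},\ldots,h^{(m)}\}$ for $\mcH \subseteq \mcG^n$ and a query element $x \in \mcG^n$, I compute the corresponding permutations $\tilde{h}^{(1)},\ldots,\tilde{h}^{(m)}, \tilde{x}$ of $\Omega$. This conversion is extremely cheap in parallel: for each generator and each pair $(g,i) \in \Omega$, the image under $\tilde{h}^{(j)}$ is obtained by a single group multiplication in the fixed group $\mcG$, which is an $O(1)$-sized table lookup. All $m \cdot |\Omega|$ entries can be produced simultaneously, placing the conversion in $\AC^0$ (with oracle access to $\mcG$'s multiplication table of constant size). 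Let $\tilde{\mcH}$ denote the subgroup of $\Sym(\Omega)$ generated by the $\tilde{h}^{(j)}$.

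Next, I invoke Theorem~\ref{thm_Luks_algo} on $\tilde{\mcH}$ and $\tilde{x}$ to test whether $\tilde{x} \in \tilde{\mcH}$. Since $\tilde{\mcH}$ is a $b$-bounded Abelian permutation group on $\Omega$, this membership test runs in $\NC^2$. Because the representation $h \mapsto \tilde{h}$ is injective (the identity permutation forces every coordinate of $h$ to act trivially on $\mcG$, hence $h_i = 1$ for all $i$), we have $x \in \mcH$ if and only if $\tilde{x} \in \tilde{\mcH}$, so the answer of the permutation-group membership test is the desired output.

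The composition of an $\AC^0$ preprocessing step with an $\NC^2$ membership test still lies in $\NC^2$, yielding the claimed algorithm. The only potential subtlety — the main obstacle, such as it is — is ensuring that the $b$-bounded hypothesis of Theorem~\ref{thm_Luks_algo} is genuinely satisfied after the reduction, which is why the representation is taken to act on $\Omega = \mcG \times [n]$ rather than, say, on $\mcG^n$ itself: the $n$ orbits of size $|\mcG|$ give the required constant orbit bound, so the McKenzie--Cook $\NC^2$ bound applies unchanged.
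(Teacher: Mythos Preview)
Your proposal is correct and is exactly the approach the paper intends: the paragraph immediately preceding the lemma sets up precisely this embedding of $\mcG^n$ into $\Sym(\mcG\times[n])$ as a $b$-bounded Abelian permutation group, and the lemma is then stated as an easy consequence of Theorem~\ref{thm_Luks_algo}. You have simply spelled out the details (faithfulness, the $\AC^0$ cost of the conversion) that the paper leaves implicit.
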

Given any $y = (y_1,y_2,\ldots,y_i)\in\mcG^i$ with $1\leq i\leq n$ and
any $S\subseteq\mcG^n$, let $S_y$ denote the set $\setcond{x\in S}{x_j
  = y_j \text{ for } 1\leq j\leq i}$.

\begin{lemma}\label{corollary_affine_xn}
  Let $\mcG$ be any fixed finite group. There is a polynomial time
  algorithm that takes as input a subgroup $\mcH$ of $\mcG^n$, where
  $\mcH$ is given by generators, and a $y\in\mcG^i$ with $1\leq i\leq
  n$, and computes $|\mcH_y|$.
\end{lemma}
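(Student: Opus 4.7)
The plan is to reduce the computation of $|\mcH_y|$ to standard permutation-group tasks handled in polynomial time by Theorem~\ref{schreier-sims}. Following the identification used in Lemma~\ref{corollary_gp_membership}, view $\mcG^n$ as a subgroup of $\Sym(\Omega)$ with $\Omega = \mcG\times[n]$, where an element $g\in\mcG^n$ sends $(h,j)$ to $(g_j h, j)$. Under this identification, the input generators of $\mcH$ immediately furnish a generating set for $\mcH$ as a permutation group on the $n|\mcG|$-element set $\Omega$, whose size is polynomial in $n$ since $|\mcG|$ is a fixed constant.

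Now consider the projection homomorphism $\phi\colon \mcH\to\mcG^i$ defined by $\phi(x_1,\ldots,x_n)=(x_1,\ldots,x_i)$. The set $\mcH_y$ is precisely the fiber $\phi^{-1}(y)$, which is either empty or a coset of $\ker\phi$; the first isomorphism theorem then gives $|\mcH_y| = 0$ when $y\notin\mathrm{Im}\,\phi$, and $|\mcH_y| = |\mcH|/|\mathrm{Im}\,\phi|$ otherwise. Crucially, a generating set for $\mathrm{Im}\,\phi\le\mcG^i$ is obtained simply by truncating each generator of $\mcH$ to its first $i$ coordinates, and $\mcG^i$ is treated as a subgroup of $\Sym(\mcG\times[i])$ via the same left-multiplication action.

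With these representations in hand, part~(2) of Theorem~\ref{schreier-sims} computes $|\mcH|$ and $|\mathrm{Im}\,\phi|$, while part~(3) decides whether the element $y\in\mathrm{Im}\,\phi$; assembling these three outputs yields $|\mcH_y|$ in deterministic polynomial time. The argument is essentially bookkeeping on top of Schreier--Sims, and the main (minor) point to verify is that the permutation encoding of $\mcG^n$ and $\mcG^i$ has size polynomial in $n$, which follows from the fact that $\mcG$ is a fixed finite group. Note that no Abelianness of $\mcG$ is needed here, since Schreier--Sims applies to arbitrary permutation groups; this is what will later allow the sequential generalization to nonabelian $\mcG$.
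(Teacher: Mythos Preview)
Your proof is correct and follows essentially the same approach as the paper: both identify $\mcH_y$ as a fiber of the projection $\pi_i\colon\mcH\to\mcG^i$ (hence empty or a coset of its kernel), test emptiness by checking membership of $y$ in $\pi_i(\mcH)$ via part~(3) of Theorem~\ref{schreier-sims}, and then compute the coset size. The only cosmetic difference is that the paper computes $|\ker\pi_i|$ directly as the order of a pointwise stabilizer using parts~(1) and~(2), whereas you obtain it as $|\mcH|/|\pi_i(\mcH)|$ via the first isomorphism theorem and two applications of part~(2).
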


\begin{proof}
  Let $\mcK=\{(x_1,x_2,\ldots,x_n)\in\mcH\mid x_1=x_2=\cdots=x_i=1\}$,
  where $1$ denotes the identity element of $\mcG$. Clearly, $\mcK$ is
  a subgroup of $\mcH$. The set $\mcH_y$, if nonempty, is simply a
  coset of $\mcK$ and thus, we have $|\mcH_y| = |\mcK|$. To check if
  $\mcH_y$ is nonempty, we consider the map
  $\pi_i:\mcG^n\rightarrow\mcG^i$ that projects its input onto its
  first $i$ coordinates; note that $\mcH_y$ is nonempty iff the
  subgroup $\pi_i(\mcH)$ contains $y$, which can be checked in
  polynomial time by point ($3$) of Theorem~\ref{schreier-sims} (here,
  we are identifying $\mcG^n$ with a subgroup of $\Sym(\mcG\times
  [n])$ as above). If $y\notin \pi_i(\mcH)$, the algorithm outputs
  $0$. Otherwise, we have $|\mcH_y|=|\mcK|$ and it suffices to compute
  $|\mcK|$. But $\mcK$ is simply the point-wise stabilizer of the set
  $\mcG\times [i]$ in $\mcH$, and hence $|\mcK|$ can be computed in
  polynomial time by points ($1$) and ($2$) of
  Theorem~\ref{schreier-sims}.
\end{proof}

\section{Expanding Cayley Graphs and the Remote Point Problem}
\label{section_main_lemma}

Fix a group $\mcG$ such that $|\mcG|\geq 2$, and consider an instance
of the RPP over $\mcG$. The main idea that we develop in this section
is that if we have a (symmetric) expanding generator set $S$ for the
group $\mcG^n$ with appropriate expansion parameters then for a
subgroup $\mcH$ of $\mcG^n$ such that $\delta(\mcH)\leq k$ some
element of $S$ will be $r$-far from $H$, for suitable $k$ and $r$.

We review some definitions related to expander graphs (e.g.\ see
the survey of Hoory, Linial, and Wigderson \cite{HLW06}). An
undirected multigraph $G = (V,E)$ is an $(n,d,\alpha)$-graph for
$n,d\in\naturals$ and $\alpha > 0$ if $|V| = n$, the degree of each
vertex is $d$, and the second largest value $\lambda(G)$ from among
the absolute values of eigenvalues of $A(G)$ -- the adjacency matrix
of the graph $G$ -- is bounded by $\alpha d$.

A \emph{random walk} of length $t\in\naturals$ on an
$(n,d,\alpha)$-graph $G = (V,E)$ is the output of the following random
process: a vertex $v_0\in V$ of picked uniformly at random, and for
$0\leq i < t$, if $v_i$ has been picked, then $v_{i+1}$ is obtained by
selecting a neighbour $v_{i+1}$ uniformly at random (i.e a random edge
out of $v_i$ is picked, and $v_{i+1}$ is chosen to be the other
endpoint of the edge); the output of the process is
$(v_0,v_1,\ldots,v_t)$. We now state an important result regarding
random walks on expanders (see \cite[Theorem 3.6]{HLW06} for details).

\begin{lemma}\label{lemma_exp_walk}
  Let $G=(V,E)$ be an $(n,d,\alpha)$-graph and $B\subseteq V$ with
  $|B|\leq \beta n$. Then, the probability that a random walk
  $(v_0,v_1,\ldots,v_t)$ is entirely contained inside $B$ (i.e,
  $v_i\in B$ for each $i$) is bounded by $(\beta+\alpha)^t$.
\end{lemma}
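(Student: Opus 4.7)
The plan is to express the staying probability as a quadratic form in powers of a symmetric operator, and then bound its operator norm via the spectral decomposition of the transition matrix.

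Let $P = A(G)/d$ be the normalized transition matrix of the walk, and let $\Pi$ be the $n\times n$ diagonal matrix with $\Pi_{v,v} = 1$ for $v \in B$ and $0$ otherwise, so $\Pi \mathbf{1} = \mathbf{1}_B$ and $\Pi^2 = \Pi$. Summing over paths $(b_0,\ldots,b_t) \in B^{t+1}$ weighted by $(1/n)\prod_i P_{b_{i+1},b_i}$, together with the identity $(\Pi P \Pi)^t = (\Pi P)^t \Pi$ for $t \geq 1$, yields
\[
\Pr[v_0,\ldots,v_t \in B] \;=\; \frac{1}{n}\,\mathbf{1}^T M^t \mathbf{1}, \qquad M := \Pi P \Pi,
\]
with $M$ symmetric (the case $t=0$ reduces to $\beta \leq 1$).

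The main step is to prove $\|M\|_{\mathrm{op}} \leq \alpha + \beta$. Decompose $P = (1/n) J + E$, where $J = \mathbf{1}\mathbf{1}^T$ and $E$ is the restriction of $P$ to the orthogonal complement of $\mathbf{1}$; the $(n,d,\alpha)$-assumption gives $\|E\|_{\mathrm{op}} \leq \alpha$. For any $v$, set $w = \Pi v$ (supported on $B$ with $\|w\|_2 \leq \|v\|_2$); then
\[
\|M v\|_2 \;=\; \|\Pi P w\|_2 \;\leq\; \|\Pi J w\|_2/n + \|\Pi E w\|_2 \;\leq\; \sqrt{|B|}\,|\mathbf{1}_B^T w|/n + \alpha \|v\|_2.
\]
Applying Cauchy-Schwarz \emph{restricted to the support} $B$ gives $|\mathbf{1}_B^T w| \leq \sqrt{|B|}\,\|w\|_2$, so the first summand is at most $(|B|/n)\|w\|_2 \leq \beta \|v\|_2$, and hence $\|M\|_{\mathrm{op}} \leq \alpha + \beta$.

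Iterating, $\|M^t\|_{\mathrm{op}} \leq (\alpha+\beta)^t$, and a final application of Cauchy-Schwarz yields $|\mathbf{1}^T M^t \mathbf{1}| \leq n(\alpha+\beta)^t$, giving the claimed probability bound. The only genuinely delicate point is the support-restricted Cauchy-Schwarz bound on $\mathbf{1}_B^T w$, which converts what would otherwise be a $\sqrt{\beta}$ factor per step (if one applied Cauchy-Schwarz on all of $V$) into the sharper $\beta$ factor, producing $(\alpha+\beta)^t$ rather than $(\alpha+\sqrt{\beta})^t$.
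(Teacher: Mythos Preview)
Your proof is correct. Note, however, that the paper does not give its own proof of this lemma: it is stated as a known result with a pointer to \cite[Theorem~3.6]{HLW06}. Your argument---writing the staying probability as $\frac{1}{n}\mathbf{1}^T(\Pi P\Pi)^t\mathbf{1}$, decomposing $P=\frac{1}{n}J+E$, and bounding $\|\Pi P\Pi\|_{\mathrm{op}}\le\alpha+\beta$ via Cauchy--Schwarz on the $B$-supported vector---is exactly the standard spectral proof that appears in that survey, so there is nothing to compare. One minor remark: your closing comment slightly oversells the ``support-restricted Cauchy--Schwarz'' step; since $w=\Pi v$ is already supported on $B$, the identity $\mathbf{1}^T w=\mathbf{1}_B^T w$ is automatic, and the $\sqrt{|B|}$ bound is the only natural one.
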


Let $\mcH$ be a group and $S$ a \emph{symmetric} multiset of elements
from $\mcH$. I.e.\ there is a bijection of multisets
$\varphi:S\rightarrow S$ such that $\varphi(s) = s^{-1}$ for each
$s\in S$. We define the Cayley graph $C(\mcH,S)$ to be the (multi)graph $G$
with vertex set $\mcH$ and edges of the form $(x,xs)$ for each
$x\in\mcH$ and each $s\in S$; since $S$ is symmetric, we consider
$C(\mcH,S)$ to be an undirected graph by identifying the edges
$(x,xs)$ and $(xs,(xs)\varphi(s))$, for each $x$ and $s$.

We now show a lemma that will help relate generators of expanding
Cayley graphs on $\mcG^n$ and the RPP over $\mcG$.  In what follows,
let $S$ be a symmetric multiset of elements from $\mcG^n$; let $G$
denote the Cayley graph $C(\mcG^n,S)$; and let $N,D$ denote $|\mcG|^n$
and $|S|$ (counted with repetitions) respectively.

\begin{lemma}\label{lemma_exp_set_subgps}
  Assume $S$ as above is such that $G$ is an $(N,D,\alpha)$-graph,
  where $\alpha \leq \frac{1}{n^d}$, for some fixed $d>0$. Then, given
  any subgroup $\mcH$ of $\mcG^n$ such that $\delta(\mcH)\leq 2n/3$,
  we have $\frac{|S\intersect \mcH|}{|S|}\leq \frac{1}{n^{d/2}}$ for
  large enough $n$ (where the elements of $S\intersect \mcH$ are
  counted with repetitions).
\end{lemma}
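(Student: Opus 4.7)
The plan is to compare two estimates for the probability that a random walk of length $t$ in $G$ remains entirely inside the subgroup $\mcH$. Lemma~\ref{lemma_exp_walk} applied with $B=\mcH$ upper-bounds this probability by $(\beta+\alpha)^t$, where $\beta := |\mcH|/N \leq |\mcG|^{-n/3}$ because $\delta(\mcH)\leq 2n/3$. On the other side, using the subgroup structure of $\mcH$, I would compute the same probability \emph{exactly} in terms of the quantity $p := |S\cap \mcH|/|S|$ that we wish to bound.

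For the exact computation, let $(v_0,v_1,\ldots,v_t)$ be the random walk, so $v_0$ is uniform in $\mcG^n$ and $v_{i+1}=v_i s_i$ with $s_i$ uniform in $S$. Then $\Pr[v_0\in \mcH] = \beta$, and conditioned on $v_0,\ldots,v_i\in \mcH$, we have $v_{i+1}\in\mcH$ iff $s_i\in v_i^{-1}\mcH = \mcH$ (using that $\mcH$ is a subgroup and $v_i\in\mcH$), which happens with probability exactly $p$, independently of the past. Chaining these conditional probabilities gives
\[
  \Pr[v_0,v_1,\ldots,v_t\in \mcH] \;=\; \beta\cdot p^t.
\]

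Combining this identity with the expander bound yields $\beta p^t \leq (\beta+\alpha)^t$, hence $p\leq (\beta+\alpha)\,\beta^{-1/t}$. I would then take $t$ linear in $n$: since $\beta\geq |\mcG|^{-n}$, the factor $\beta^{-1/t}$ becomes a constant depending only on the fixed group $\mcG$; and because $\beta$ is exponentially small in $n$ while $\alpha = 1/n^d$ is only polynomially small, we have $\beta+\alpha\leq 2\alpha$ for large $n$, so $p\leq O(1/n^d)\leq 1/n^{d/2}$ for $n$ sufficiently large. The one step that really requires care is the clean factorization $\beta\cdot p^t$: this collapsing of the conditional probabilities to a product genuinely uses $v_i^{-1}\mcH = \mcH$, and the argument would fail if $\mcH$ were merely a coset or an arbitrary set of the same cardinality.
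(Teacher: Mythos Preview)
Your argument is correct and is a clean variant of the paper's proof. Both proofs sandwich a random-walk probability between an expander upper bound (Lemma~\ref{lemma_exp_walk}) and an algebraic lower bound, but the events differ. The paper considers the event $\mathcal{B}$ that the walk $(x_0,\ldots,x_t)$ stays inside \emph{some} coset $y\mcH$; this has the easy lower bound $\eta^t$ (whatever coset $x_0$ lands in, choosing all steps from $S\cap\mcH$ keeps the walk there), but the upper bound requires applying Lemma~\ref{lemma_exp_walk} to each coset separately and then a union bound over up to $|\mcG|^n$ cosets, which is what forces $t=\Theta(n/\log n)$ and yields only $\eta\leq n^{-d/2}$. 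You instead look at the event that the walk stays in $\mcH$ itself; the subgroup identity $v_i^{-1}\mcH=\mcH$ lets you compute this probability \emph{exactly} as $\beta p^t$, so no union bound is needed, Lemma~\ref{lemma_exp_walk} applies once with $B=\mcH$, and $t$ linear in $n$ already gives $p=O(n^{-d})$, stronger than the stated $n^{-d/2}$. Your final remark is apt: the exact factorization genuinely relies on $\mcH$ being a subgroup, which is precisely the structure the paper's coset argument also exploits, just packaged differently.
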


\begin{proof}
  Let $S' = S\intersect\mcH$ and let $\eta = |S'|/|S|$. We want an
  upper bound on $\eta$.  Consider a random walk
  $(x_0,x_1,\ldots,x_t)$ of length $t$ on the graph $G$ (the exact
  value of $t$ will be fixed later). Let $\mathcal{B}$ denote the
  following event: there is a $y\in\mcG^n$ such that all the vertices
  $x_0,x_1,\ldots,x_t$ are all contained in the coset $y\mcH$ of
  $\mcH$.  Let $p$ denote the probability that $\mathcal{B}$ occurs.

  We will first lower bound $p$. At each step of the random walk, a
  random $s_i\in S$ is chosen and $x_{i+1}$ is set to $x_is_i$. If
  these $s_i$ all happen to belong to $S'$, then the cosets $x_i\mcH$
  and $x_{i+1}\mcH$ are the same for all $i$ and hence, the event
  $\mathcal{B}$ does occur. Hence, $p \geq \eta^{t}$.

  We now upper bound $p$. Fix any coset $y\mcH$ of the subgroup
  $\mcH$. Since the dimension of $\mcH$ in $\mcG^n$ is bounded by
  $2n/3$, we have $|y\mcH| = |\mcH| \leq |\mcG|^{2n/3} \leq
  2^{-n/3}|\mcG^n|$. That is, the coset $y\mcH$ is a very small subset
  of $\mcG^n$. Applying Lemma \ref{lemma_exp_walk}, we see that the
  probability that the random walk $(x_0,x_1,\ldots,x_t)$ is
  completely contained inside this coset is bounded by $(2^{-n/3} +
  n^{-d})^t \leq \frac{2^t}{n^{dt}}$, for large enough $n$. As the
  total number of cosets of $\mcH$ is bounded by $|\mcG|^n$, an
  application of the union bound tells us that $p$ is upper bounded by
  $|\mcG|^n\frac{2^t}{n^{dt}}\leq \frac{|\mcG|^{n+t}}{n^{dt}}$.
  Setting $t = \frac{2n}{d\log_{|G|}n -2}$ we see that $p$ is at most
  $\frac{1}{n^{dt/2}}$.

  Putting the upper and lower bounds together, we see that $\eta^t
  \leq \frac{1}{n^{dt/2}}$ and hence, $\eta \leq \frac{1}{n^{d/2}}$.
  This completes the proof.
\end{proof}

We follow the structure of the algorithm for the RPP over $\field_2$
in \cite{APY09}. We first describe their $(n/2,c\log n)$-algorithm for
the RPP, followed by our own algorithm. We then describe how they
extend this algorithm to a $(k,\frac{cn\log k}{k})$-algorithm for any
$k\leq n/2$; the same procedure works for our algorithm also.

The $(n/2,c\log n)$-algorithm proceeds as follows. On an input
instance consisting of a subgroup $V$ (which is a subspace of
$\field_2^n$) of dimension at most $n/2$ and an $r\leq c\log n$,

\begin{enumerate}
\item The algorithm first computes a collection of $m = n^{O(c)}$
  subspaces $V_1,V_2,\ldots,V_m$, each of dimension at most $2n/3$
  such that $B(V,c\log n)\subseteq \bigcup_{i=1}^m V_i$.
\item The algorithm then finds an $x\in\field_2^n$ such that $x\notin
  \bigcup_i V_i$. (This is done using a method similar to the method
  of pessimistic estimators introduced by Raghavan \cite{Rag88}.)
\end{enumerate}

Our algorithm will proceed exactly as the above algorithm in the first
step. The second step of our algorithm will be different (assuming
that the group $\mcG$ is Abelian). We first state Step 1 of the
algorithm of \cite{APY09} in greater generality:

\begin{lemma}
	\label{lemma_cover_subgroups}
	Let $\mcG$ be any fixed finite group with $|\mcG|\geq 2$. For any
	constant $c>0$ and large enough $n$, the following holds. Given any
	subgroup $\mcH$ of $\mcG^n$ such that $\delta(\mcH)\leq
	\frac{n}{2}$, there is a collection of $m \leq n^{10c}$ subgroups
	$\mcH_1,\mcH_2,\ldots,\mcH_m$ such that $B(\mcH,c\log n)\subseteq
	\bigcup_{i=1}^m\mcH_i$, and $\delta(\mcH_i)\leq 2n/3$ for each $i$.
	Moreover, there is a logspace algorithm that, when given as input
	$\mcH$ as a set of generators, produces generators for the subgroups
	$\mcH_1,\mcH_2,\ldots,\mcH_m$.
\end{lemma}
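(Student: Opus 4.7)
The plan is to follow the Alon--Panigrahy--Yekhanin covering strategy, while carefully handling the nonabelian case via the normality of coordinate subgroups. Let $r = c\log n$. The starting point is that any $x \in B(\mcH,r)$ can be written as $x = h\cdot e$ with $h \in \mcH$ and $e$ of weight at most $r$; so if $T\subseteq[n]$ contains the support of $e$, then $e$ lies in the subgroup $E_T := \setcond{y \in \mcG^n}{y_i = 1 \text{ for all } i \notin T}$. The key observation is that $E_T$ is normal in $\mcG^n$ (conjugation is coordinate-wise), so $\mcH E_T$ is automatically a subgroup, and the standard product formula $|\mcH E_T| = |\mcH|\,|E_T|/|\mcH \cap E_T|$ gives $\delta(\mcH E_T) \le \delta(\mcH) + |T|$. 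Thus any $T$ of size at most $n/6$ yields a candidate subgroup of dimension at most $2n/3$, and it suffices to exhibit a small family of such sets $T$ covering every $r$-subset of $[n]$.

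Second, I would use a block-covering construction. Partition $[n]$ into $t = \lceil 7r \rceil$ consecutive blocks $B_1,\ldots,B_t$ of size $\lceil n/t \rceil$, which is at most $n/(6r)$ for $n$ large enough. For each $J \in \binom{[t]}{r}$, set $T_J := \bigcup_{j\in J} B_j$, so $|T_J|\le r \cdot n/(6r) = n/6$. Any $r$-subset $S\subseteq[n]$ meets at most $r$ of the blocks, so extending the (at most $r$) indices of those blocks arbitrarily to a set $J \in \binom{[t]}{r}$ gives $S\subseteq T_J$. The number of such $J$ is $\binom{\lceil 7r\rceil}{r} \le 2^{7r\,H(1/7)} \le n^{10c}$ with room to spare. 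Setting $\mcH_i := \mcH E_{T_J}$ as $J$ ranges over $\binom{[t]}{r}$ thus yields the desired covering $B(\mcH,r)\subseteq\bigcup_i \mcH_i$ with $\delta(\mcH_i)\le 2n/3$ for each $i$.

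Finally, for the logspace claim: since $t = O(\log n)$, each subset $J$ is encoded in $O(\log n)$ bits and $\binom{[t]}{r}$ can be enumerated in logspace. For each $J$, generators of $\mcH E_{T_J}$ are obtained by concatenating the input generators of $\mcH$ with the ``coordinate generators'' that place a generator of the fixed group $\mcG$ at a single position $i \in T_J$ and the identity elsewhere; this is a purely syntactic operation, computable with an $O(\log n)$-bit counter iterating over positions $i$ and over a fixed generating set of $\mcG$. The one conceptual point to verify is that $\mcH E_T$ is genuinely a subgroup of the expected dimension in the nonabelian setting, but this drops out immediately from the normality of $E_T$ in $\mcG^n$ and the product formula. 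All remaining steps---the covering design, the dimension accounting, and the logspace enumeration---are routine, so I do not anticipate a substantive obstacle.
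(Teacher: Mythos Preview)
Your proposal is correct and follows essentially the same approach as the paper: partition $[n]$ into $O(\log n)$ blocks, take unions of $r$ blocks as the supports $T$, use normality of the coordinate subgroups $E_T$ to conclude that $\mcH E_T$ is a subgroup, and bound its dimension via the product formula. The only differences are cosmetic---the paper uses $10c\log n$ blocks and the cruder count $2^\ell \le n^{10c}$ rather than your $7r$ blocks and binomial-entropy bound---and the usual integrality caveat that $r=c\log n$ should be rounded (the paper writes $\lceil c\log n\rceil$).
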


\begin{proof}
	The proof follows exactly as in \cite{APY09}. We reproduce it here
	for completeness and to analyze the complexity of the procedure. 

	Let $1$ denote the identity element of $\mcG$. For each $S\subseteq
	[n]$, let $\mcG^n(S)$ denote the subgroup of $\mcG^n$ consisting of
	those $x$ such that $x_i = 1$ for each $i\notin S$. Note that
	$\delta(\mcG^n(S)) = |S|$. Also note that for each $S\subseteq
	[n]$, the group $\mcG^n(S)$ is a normal subgroup; in particular,
	this implies that the set $\mc{K}\cdot\mcG^n(S)$ is a subgroup of
	$\mcG^n$ whenever $\mc{K}$ is a subgroup of $\mcG^n$.
	
	Partition the set $[n]$ into $\ell\leq 10c\log n$ sets of
	size at most $\lceil\frac{n}{10c\log n} \rceil$ each -- we will call
	these sets $S_1,S_2,\ldots,S_\ell$. For each $A\subseteq [\ell]$ of
	size $\lceil c\log n \rceil$, let $\mc{K}_A$ denote the subgroup
	$\mcG^n(\bigcup_{i\in A} S_i)$. Note that the number of such
	subgroups is at most $2^\ell \leq n^{10c}$. Also, for each $A$ as
	above, $\delta(\mc{K}_A) = |\bigcup_{i\in A}S_i| \leq
	\left(\frac{n}{10c\log n} + 1\right)(c\log n+1) < \frac{n}{9}$, for
	large enough $n$.

	Consider any $x\in B(c\log n)$ (i.e, an element $x$ of $\mcG^n$ s.t $wt(x)\leq
	c\log n$). We know that $x\in \mcG^n(S)$ for some $S$ of size at
	most $c\log n$. Hence, it can be seen that $x\in
	\mcG^n(\bigcup_{i\in A}S_i)$ for some $A$ of size $\lceil c\log
	n\rceil$; this shows that $B(c\log n)\subseteq \bigcup_{A}\mc{K}_A$.
	Therefore, we see that $B(\mcH,c\log n) = \mcH B(c\log n)\subseteq
	\bigcup_{A} \mcH \mc{K}_A$. 

	For each $A\subseteq [\ell]$ of size $\lceil c\log n\rceil$, let
	$\mcH_A$ denote the subgroup $\mcH \mc{K}_A$ (note that this is
	indeed a subgroup, since $\mc{K}_A$ is a normal subgroup).
	Moreover, the cardinality of this subgroup is bounded by
	$|\mcH|\cdot |\mc{K}_A|\leq |\mcG|^{n/2}|\mcG|^{n/9} <
	|\mcG|^{2n/3}$; hence, $\delta(\mcH_A)\leq 2n/3$. Thus, the
	collection of subgroups $\bra{\mcH_A}_A$ satisfies all the
	properties mentioned in the statement of the lemma. That a set of
	generators for this subgroup can be computed in deterministic
	logspace -- for some suitable choice of $S_1,S_2,\ldots,S_\ell$ --
	is a routine check from the definition of the subgroups
	$\bra{\mc{K}_A}_A$. This completes the proof of the lemma.
\end{proof}

Using Lemma \ref{lemma_cover_subgroups}, we are able to efficiently
``cover'' $B(\mcH, c\log n)$ for any small subgroup $\mcH$ of $\mcG^n$
by a union of small subgroups. Therefore, to find a point that
is $c\log n$-far from $\mcH$, it suffices to find a point $x\in\mcG^n$ not
contained in any of the covering subgroups. To do this, we note that
if $S$ is a multiset containing elements from $\mcG^n$ such that
$C(\mcG^n, S)$ is a Cayley graph with good expansion, then $S$ must
contain such an element. This is formally stated below.

\begin{lemma}
	\label{lemma_exp_hitting_set}
	For any constant $c > 0$ and large enough $n\in\naturals$, the
	following holds. Let $S$ be any multiset of elements of $\mcG^n$
	such that $\lambda(C(\mcG^n, S)) < \frac{1}{n^{20c}}$. Then, for
	$m \leq n^{10c}$ and any collection $\mcH_1,\mcH_2,\ldots,\mcH_m$ of
	subgroups such that $\delta(\mcH_i)\leq 2n/3$ for each $i$, there is
	some $s\in S$ such that $s\notin \bigcup_i \mcH_i$.
\end{lemma}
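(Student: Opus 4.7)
The plan is to apply Lemma \ref{lemma_exp_set_subgps} to each of the covering subgroups individually and then to combine the resulting density estimates by a union bound. First, I would reread the hypothesis $\lambda(C(\mcG^n, S)) < 1/n^{20c}$ as asserting that $C(\mcG^n, S)$ is an $(N, D, \alpha)$-graph with $\alpha < 1/n^{20c}$, where $N = |\mcG|^n$ and $D = |S|$ (counted with multiplicity). This is precisely the setup required by Lemma \ref{lemma_exp_set_subgps} with the parameter $d$ in that lemma taken to be $d = 20c$.

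Next, I would apply Lemma \ref{lemma_exp_set_subgps} separately to each $\mcH_i$. Each covering subgroup satisfies $\delta(\mcH_i) \leq 2n/3$, so for all sufficiently large $n$ the lemma yields
$$
\frac{|S \cap \mcH_i|}{|S|} \;\leq\; \frac{1}{n^{10c}}.
$$
The content of this bound is that although each $\mcH_i$ is of exponential size inside $\mcG^n$, the expansion of the Cayley graph $C(\mcG^n, S)$ forces it to meet the generating multiset $S$ in only an inverse-polynomial fraction, and the polynomial exponent $10c$ has been matched to the upper bound $m \leq n^{10c}$ on the number of covering subgroups.

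Finally, I would union-bound over the at most $n^{10c}$ subgroups:
$$
\frac{\bigl|S \cap \bigcup_{i=1}^m \mcH_i\bigr|}{|S|} \;\leq\; \sum_{i=1}^m \frac{|S \cap \mcH_i|}{|S|} \;\leq\; \frac{m}{n^{10c}} \;\leq\; 1.
$$
As soon as this inequality is strict, some element of the multiset $S$ is missed by every $\mcH_i$, which is exactly the claimed conclusion. The one delicate point -- and the only real obstacle -- is converting the final $\leq 1$ into a strict $< 1$ in the borderline case $m = n^{10c}$. This is where the strict inequality in the hypothesis is used: since $\alpha < 1/n^{20c}$, one may invoke Lemma \ref{lemma_exp_set_subgps} with a slightly larger value of $d$ (for instance, $d = \log(1/\alpha)/\log n > 20c$, which still satisfies the lemma's hypothesis $\alpha \leq 1/n^d$), obtaining the strictly smaller bound $|S \cap \mcH_i|/|S| < 1/n^{10c}$. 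The union bound then gives $\bigl|S \cap \bigcup_i \mcH_i\bigr| < |S|$, yielding the desired $s \in S \setminus \bigcup_i \mcH_i$.
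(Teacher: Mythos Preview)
Your proposal is correct and follows exactly the same route as the paper: apply Lemma~\ref{lemma_exp_set_subgps} with $d = 20c$ to each $\mcH_i$ and then union-bound over the $m \leq n^{10c}$ subgroups. The paper's own proof is in fact terser than yours --- it simply writes $|S\cap\mcH_i| < |S|/n^{10c}$ with a strict inequality and moves on --- whereas you explicitly justify the strictness by exploiting the strict hypothesis $\lambda < 1/n^{20c}$ to push $d$ slightly above $20c$; this extra care is appropriate (one should cap the chosen $d$ at, say, $21c$ so that the ``large enough $n$'' clause in Lemma~\ref{lemma_exp_set_subgps} remains uniform in $c$, but this is a trivial adjustment).
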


\begin{proof}
	The proof follows easily from Lemma \ref{lemma_exp_set_subgps}.
	Given any $i\in [m]$, we know, from Lemma
	\ref{lemma_exp_set_subgps}, that $|S\intersect \mcH_i| <
	\frac{|S|}{n^{10c}}$ (where the elements of the multisets are
	counted with repetitions). Hence, $|S\intersect
	\bigcup_i\mcH_i|\leq \sum_i |S\intersect \mcH_i| <
	\frac{m|S|}{n^{10c}}\leq |S|$. Therefore, there must be some $s\in
	S$ such that $s\notin \bigcup_i\mcH_i$.
\end{proof}

Therefore, to find a point $x$ that is $c\log n$-far from the subspace
$\mcH$, it suffices to construct an $S$ such that $C(\mcG^n, S)$ is a
sufficiently good expander, find the covering subgroups $\mcH_i$
($i\in[m[$), and then to find an $s\in S$ that does not lie in any of the
$\mcH_i$. We follow the above approach to give an efficient parallel
algorithm for the RPP in the case that $\mcG$ is an Abelian group. For
arbitrary groups, we show that the method of \cite{APY09} yields a
polynomial time algorithm.

\section{Remote Point Problem for Abelian Groups}

Fix an Abelian group $\mcG$. Recall that a \emph{character} $\chi$ of
$\mcG^n$ is a homomorphism from $\mcG^n$ to $\complex^*_1$, the
multiplicative subgroup of the complex numbers of absolute value $1$.
For $\varepsilon>0$, a distribution $\mu$ over $\mcG^n$ is said to be
$\varepsilon$-biased if, given any non-trivial character $\chi$ of
$\mcG^n$,
$
\left|\avg{x\sim \mu}{\chi(x)}\right|\leq \varepsilon
$.

A multiset $S$ consisting of elements from $\mcG^n$ is said to be an
\emph{$\varepsilon$-biased space in $\mcG^n$} if the uniform
distribution over $S$ is an $\varepsilon$-biased distribution.

It can be checked that a multiset consisting of
$(\frac{n}{\varepsilon})^{O(1)}$ independent, uniformly random
elements from $\mcG^n$ form an $\varepsilon$-biased space with high
probability. Explicit $\varepsilon$-biased spaces were constructed for
the group $\field_2^n$ by Naor and Naor in \cite{NN93}; further
constructions were given by Alon et al. in \cite{AGHP92}.  Explicit
constructions of $\varepsilon$-biased spaces in $\integer_d^n$ were
given by Azar et al. in \cite{AMN98}. We observe that this last
construction yields a construction for all Abelian groups $\mcG^n$,
when $\mcG$ is of constant size. We first state the result of
\cite{AMN98} in a form that we will find suitable.

\begin{theorem}\label{thm_AMN_construction}
  For any fixed $d$, there is an $\NC^2$ algorithm that does the
  following. On input $n$ and $\varepsilon>0$ (both in unary), the
  algorithm produces a symmetric multiset $S\subseteq\integer_d^n$ of
  size $O( (\frac{n}{\varepsilon})^2)$ such that $S$ is an
  $\varepsilon$-biased space in $\integer_d^n$.
\end{theorem}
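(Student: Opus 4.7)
The plan is to use the explicit construction of $\varepsilon$-biased spaces in $\integer_d^n$ from Azar, Motwani and Naor \cite{AMN98} and argue that it can be implemented in $\NC^2$ when $d$ is a fixed constant.

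First I would recall the AMN construction. A character $\chi_a$ of $\integer_d^n$ is parameterized by $a\in\integer_d^n$ and evaluates as $\chi_a(x) = \omega_d^{\langle a,x\rangle \bmod d}$, where $\omega_d = e^{2\pi i/d}$. Generalizing the Alon--Goldreich--H\aa stad--Peralta construction for $\F_2^n$, the AMN construction picks a prime power $q$ with $d\mid q-1$ and $q = \Theta(n/\varepsilon)$ (so that $\field_q^*$ contains a cyclic subgroup of order exactly $d$), indexes the sample space $S_0$ by pairs $(\alpha,\beta)\in\field_q\times\field_q$, and produces for each pair a sample point in $\integer_d^n$ whose $i$-th coordinate is obtained by evaluating a fixed low-degree expression in $\alpha,\beta,i$ over $\field_q$ and then applying the discrete logarithm into the order-$d$ cyclic subgroup of $\field_q^*$. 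A Weil-type character-sum bound shows $S_0$ is $\varepsilon$-biased; I would cite \cite{AMN98} for this step rather than re-derive it. The sample-space size is $|S_0| = q^2 = O((n/\varepsilon)^2)$.

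Next I would argue the $\NC^2$ bound. The algorithm enumerates all $q^2 = O((n/\varepsilon)^2)$ pairs $(\alpha,\beta)\in\field_q^2$ in parallel, and for each pair, in parallel across $i\in[n]$, computes the $i$-th coordinate of the corresponding sample point by performing a constant number of arithmetic operations in $\field_q$ followed by a discrete log in the fixed order-$d$ subgroup of $\field_q^*$. Since $q = \poly{n/\varepsilon}$ and $d$ is constant, each field operation manipulates $O(\log(n/\varepsilon))$-bit quantities: iterated integer multiplication is in $\TC^0\subseteq\NC^1$, and arithmetic in a bounded-degree extension of a prime field reduces to a constant number of such operations on the coordinates of the extension, so each arithmetic step is in $\NC^1$. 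The final discrete-log step is a table lookup in a table of size $d$ and is also in $\NC^1$. Composing these steps over the $q^2\cdot n$ output coordinates stays within $\NC^2$.

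Finally, to obtain symmetry I would take $S = S_0 \cup S_0^{-1}$, where $S_0^{-1} = \setcond{-s}{s\in S_0}$ with coordinate-wise negation in $\integer_d$. Because $\integer_d^n$ is Abelian, $\avg{x\sim S_0^{-1}}{\chi(x)} = \overline{\avg{x\sim S_0}{\chi(x)}}$ for every character $\chi$, so $S$ remains $\varepsilon$-biased; its size is at most $2|S_0| = O((n/\varepsilon)^2)$; and coordinate-wise negation is in $\NC^0$. The main obstacle I anticipate is keeping the depth honest: one must verify that the specific $\field_q$-arithmetic needed by the AMN expression can genuinely be carried out in $\NC^1$ (not merely in $\NC^2$), since otherwise the composition with the outer parallel enumeration over $(\alpha,\beta)$ would push the total depth past $\NC^2$. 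This reduces to the standard fact that iterated product and powering in a polynomial-size finite field are in $\NC^1$ (indeed in $\TC^0$), which handles the difficulty.
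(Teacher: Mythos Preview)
Your proposal is correct and follows essentially the same approach as the paper: cite the AMN construction for the $\varepsilon$-bias property and size bound, argue the complexity, and symmetrize by taking the disjoint union $S_0\cup S_0^{-1}$ (the paper's symmetrization step is identical to yours). The one noteworthy difference is in the complexity argument: the paper simply observes that the AMN construction can be carried out in deterministic logspace, and then uses $\L\subseteq\NC^2$; this sidesteps your depth-accounting concern about composing $\NC^1$ field operations inside an outer parallel enumeration. Your direct $\NC^2$ argument is also valid, but the logspace observation is both shorter and yields a slightly stronger statement.
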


\begin{proof}
  It is easy to see that the $\varepsilon$-biased space construction
  in \cite{AMN98} can be implemented in deterministic logspace (and
  hence in $\NC^2$). If the space $S$ obtained is not symmetric, we can
	consider the multiset that is the disjoint union of $S$ and
	$S^{-1}$, which is also easily seen to be $\varepsilon$-biased.
\end{proof}

\begin{remark}
	We note that the definition of small bias spaces in \cite{AMN98}
	differs somewhat from our own definition above. But it is easy to
	see that an $\varepsilon$-bias space in $\integer_d^n$ in the sense
	of \cite{AMN98} is a $(d\varepsilon)$-bias space according to our
	definition above.
\end{remark}

\begin{remark}
  In a recent paper, Meka and Zuckerman \cite{MZ09} observe, as we do
  below, that the construction of \cite{AMN98} gives small bias spaces
  for any arbitrary Abelian group $\mcG$.  Nevertheless, we present
  our own proof of this fact, since the small bias spaces that follow
  from our proof are of \emph{smaller} size.  Specifically, our proof
  shows how to explicitly construct sample spaces of size $O\left(
    \frac{n^2}{\varepsilon^2} \right)$, whereas the relevant result in
  \cite{MZ09} only produces small bias spaces of size $O\left(
    (\frac{n}{\varepsilon})^b \right)$, where $b$ is some constant
  that depends on $\mcG$ (and can be as large as $\Omega(\log
	|\mcG|)$).
\end{remark}

\begin{lemma}\label{corollary_gp_spaces}
  For any fixed group $\mcG$, there is an $\NC^2$ algorithm which, on
  input $n$ and $\varepsilon>0$ in unary, produces a symmetric
  multiset $S\subseteq\mcG^n$ of size $O((\frac{n}{\varepsilon})^2)$
  such that $S$ is an $\varepsilon$-biased space in $\mcG^n$.
\end{lemma}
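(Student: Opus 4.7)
The plan is to reduce the construction to Theorem~\ref{thm_AMN_construction} by pulling back along a simple surjection. Using the structure theorem for finite Abelian groups, write $\mcG \cong \integer_{d_1}\times\cdots\times\integer_{d_k}$, where $k$ and each $d_i$ are constants depending only on $\mcG$. Set $d = \mathrm{lcm}(d_1,\ldots,d_k)$; then coordinatewise reduction modulo $d_i$ gives a surjective homomorphism $\phi:\integer_d^k \twoheadrightarrow \mcG$, which in turn lifts blockwise to a surjective homomorphism $\Phi:\integer_d^{kn}\twoheadrightarrow \mcG^n$.

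The algorithm is: first invoke Theorem~\ref{thm_AMN_construction} with parameters $kn$ and $\varepsilon$ to produce, in $\NC^2$, a symmetric $\varepsilon$-biased multiset $S'\subseteq\integer_d^{kn}$ of size $O((kn/\varepsilon)^2)$; since $k$ is constant, this is $O((n/\varepsilon)^2)$. Then apply $\Phi$ coordinatewise to each element of $S'$ to obtain $S=\Phi(S')\subseteq\mcG^n$, which has the same cardinality. Computing $\Phi$ amounts to $kn$ independent reductions modulo constants, trivially in $\NC^2$ (indeed in $\mathrm{AC}^0$).

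The only substantive check is that $S$ inherits $\varepsilon$-bias. For any character $\chi$ of $\mcG^n$, the composition $\chi\circ\Phi$ is a character of $\integer_d^{kn}$ since $\Phi$ is a homomorphism; and if $\chi$ is nontrivial, then $\chi\circ\Phi$ is nontrivial because $\Phi$ is surjective. Hence
\[
\left|\avg{x\sim S}{\chi(x)}\right|=\left|\avg{y\sim S'}{(\chi\circ\Phi)(y)}\right|\leq\varepsilon.
\]
Symmetry of $S$ follows from symmetry of $S'$ together with the fact that $\Phi$ preserves inverses. The main (mild) obstacle is this pullback-of-characters step: one must be sure that a nontrivial character of $\mcG^n$ does not become trivial under $\Phi^*$, and surjectivity of $\Phi$ handles exactly this. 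Everything else is bookkeeping, and the $k^2$ blow-up in the size bound is absorbed into the $O(\cdot)$ because $\mcG$ is fixed.
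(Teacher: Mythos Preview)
Your proposal is correct and follows essentially the same approach as the paper: both use the structure theorem to realize $\mcG^n$ as a quotient of $\integer_d^{kn}$ (the paper takes the invariant-factor decomposition so that $d=d_k$, you take $d=\mathrm{lcm}(d_1,\ldots,d_k)$, which is the same thing), apply Theorem~\ref{thm_AMN_construction} upstairs, push forward via the surjection, and verify $\varepsilon$-bias by pulling back characters. The key observation in both is that surjectivity guarantees nontrivial characters pull back to nontrivial characters.
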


\begin{proof}
  By the Fundamental Theorem of finite Abelian groups, $\mcG\cong
  \integer_{d_1}\oplus \integer_{d_2}\oplus
  \cdots\oplus\integer_{d_k}$, for positive integers
  $d_1,d_2,\ldots,d_k$ such that $d_1\mid d_2\mid\cdots\mid d_k$. Let
  $\mcG_0$ denote $\integer_{d_k}^k$. Note that for any
  $s,t\in\naturals$, $\integer_s\cong \integer_{st}/\integer_t$.
  Hence, we see that that $\mcG \cong \mcG_0/\mcH$, where $\mcH$ is
  the subgroup
  $\integer_{e_1}\oplus\integer_{e_2}\oplus\cdots\oplus\integer_{e_k}$,
  and $e_i = d_k/d_i$ for each $i\in [k]$. Therefore, $\mcG^n\cong
  \mcG_0^n/\mcH^n$. Let $\pi:\mcG_0^n\rightarrow \mcG^n$ be the
  natural onto homomorphism with kernel $\mcH^n$. Note that $\pi$ is
  just the projection map and can easily be computed in $\NC^2$.

  Since $\mcG_0^n\cong \integer_{d_k}^{nk}$, by Theorem
  \ref{thm_AMN_construction}, there is an $\NC^2$ algorithm that
  constructs a symmetric multiset $S_0\subseteq\mcG_0^n$ of size
  $O(\left(\frac{kn}{\varepsilon}\right)^2)$ such that $S_0$ is an
  $\varepsilon$-biased space in $\mcG_0^n$. We claim that the multiset
  $S = \pi(S_0)$ is a symmetric $\varepsilon$-biased space in 
  $\mcG^n$. To see this, consider any non-trivial character $\chi$ of
  $\mcG^n$; note that $\chi_0 = \chi\circ\pi$ is a non-trivial
  character of $\mcG_0^n$. We have
  \[
  \left|\avg{x\sim S}{\chi(x)}\right| = \left|\avg{x_0\sim
      S_0}{\chi(\pi(x_0))}\right| = \left|\avg{x_0\sim
      S_0}{\chi_0(x)}\right| \leq \varepsilon
  \]
	where the first equality follows from the definition of $S$, and the
	last inequality follows from the fact that $S_0$ is an
	$\varepsilon$-biased space in $\mcG_0^n$. Since $\chi$ was an
	arbitrary non-trivial character of $\mcG^n$, we have proved that $S$
  is indeed an $\varepsilon$-biased space in $\mcG^n$. It is easy to
  see that $S$ is symmetric. Finally, note that $S$ can be computed in
  $\NC^2$. This completes the proof.
\end{proof}

Finally, we mention a well-known connection between small bias spaces
in $\mcG^n$ and Cayley graphs over $\mcG^n$ (e.g.\ see Alon and
Roichman \cite{AR94}).

\begin{lemma}\label{lemma_e_bias_exp}
  Given any symmetric multiset $S\subseteq\mcG^n$, the Cayley graph
  $C(\mcG^n, S)$ is an $(|\mcG|^n, |S|, \alpha)$-graph iff $S$ is an
  $\alpha$-biased space.
\end{lemma}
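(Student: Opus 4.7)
The plan is to exploit the fact that $\mcG^n$ is Abelian, which means the adjacency matrix of the Cayley graph $C(\mcG^n, S)$ admits a very clean spectral decomposition in terms of the characters of $\mcG^n$. The eigenvalues turn out to be exactly the character sums $\sum_{s \in S} \chi(s)$, and the Cayley graph bound in the definition of an $(N, D, \alpha)$-graph will then translate directly into a bound on $|\avg{s \sim S}{\chi(s)}|$ for non-trivial characters, which is the definition of an $\alpha$-biased space.

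Concretely, I would first set up notation: write $N = |\mcG|^n$, $D = |S|$, and for each $s \in \mcG^n$ let $P_s$ be the $N \times N$ permutation matrix corresponding to right-multiplication by $s$ on $\mcG^n$. The adjacency matrix of $C(\mcG^n, S)$ is $A = \sum_{s \in S} P_s$ (with multiplicities). Since $\mcG^n$ is Abelian, all the $P_s$ commute pairwise, so they are simultaneously diagonalizable. The key step is to verify that for each character $\chi$ of $\mcG^n$, the vector $v_\chi \in \complex^{\mcG^n}$ defined by $v_\chi(x) = \chi(x)$ is a common eigenvector: a one-line calculation gives $(A v_\chi)(x) = \sum_{s \in S} \chi(xs) = \chi(x) \sum_{s \in S} \chi(s)$, so the corresponding eigenvalue is $\lambda_\chi = \sum_{s \in S} \chi(s)$. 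Since the characters of an Abelian group of order $N$ form an orthogonal basis of $\complex^{\mcG^n}$ (and there are exactly $N$ of them), these $\lambda_\chi$ are all the eigenvalues of $A$ counted with multiplicity. Symmetry of $S$ ensures $A$ is real symmetric and each $\lambda_\chi$ is real, so no issue with the definition of $\lambda(G)$.

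The final step is to read off the two halves of the equivalence. The trivial character $\chi \equiv 1$ gives $\lambda_{\mathbf{1}} = |S| = D$, which is the top eigenvalue. For any non-trivial character $\chi$,
\[
|\lambda_\chi| = \left|\sum_{s \in S} \chi(s)\right| = |S| \cdot \left|\avg{s \sim S}{\chi(s)}\right| = D \cdot \left|\avg{s \sim S}{\chi(s)}\right|.
\]
Therefore $\lambda(C(\mcG^n, S)) = \max_{\chi \neq \mathbf{1}} |\lambda_\chi| = D \cdot \max_{\chi \neq \mathbf{1}} |\avg{s \sim S}{\chi(s)}|$. The condition $\lambda(C(\mcG^n, S)) \leq \alpha D$ is thus equivalent to $\max_{\chi \neq \mathbf{1}} |\avg{s \sim S}{\chi(s)}| \leq \alpha$, which is exactly the definition of $S$ being an $\alpha$-biased space in $\mcG^n$.

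There is no substantial obstacle here; the lemma is essentially a bookkeeping restatement of classical Abelian Fourier analysis on the Cayley graph, and the whole argument is a short eigenvalue calculation. The only point one must be slightly careful about is handling the multiset nature of $S$ (the eigenvalue formula just sums over $s \in S$ with multiplicity, matching our convention that $\avg{s \sim S}{\cdot}$ is the expectation under the uniform distribution on the multiset).
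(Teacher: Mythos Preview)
Your proof is correct and is precisely the standard argument: the characters of the Abelian group $\mcG^n$ form a common eigenbasis for the adjacency matrix of the Cayley graph, the eigenvalue for $\chi$ is $\sum_{s\in S}\chi(s)$, and dividing by $|S|$ identifies the spectral gap condition with the $\alpha$-bias condition. The paper itself does not supply a proof of this lemma at all; it simply cites it as a well-known fact (referring to Alon and Roichman \cite{AR94}), so your write-up in fact goes beyond what the paper provides.
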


Lemmas \ref{lemma_e_bias_exp} and \ref{corollary_gp_spaces} have the
following easy consequence:
\begin{lemma}
  \label{corollary_exp_groups}
  For any Abelian group $\mcG$, there is an $\NC^2$ algorithm which, on unary
  inputs $n$ and $\alpha > 0$, produces a symmetric multiset
  $S\subseteq\mcG^n$ of size $O( (\frac{n}{\alpha})^2)$ such that
  $C(\mcG^n,S)$ is a $(|\mcG|^n,|S|,\alpha)$-graph.
\end{lemma}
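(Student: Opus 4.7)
The plan is to obtain the desired $S$ by directly invoking Lemma~\ref{corollary_gp_spaces} with bias parameter $\varepsilon := \alpha$, and then to translate the bias property into the spectral bound on the Cayley graph via Lemma~\ref{lemma_e_bias_exp}.

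More concretely, first I would run the $\NC^2$ algorithm guaranteed by Lemma~\ref{corollary_gp_spaces} on inputs $n$ and $\alpha$ (both in unary, as required). Its output is a symmetric multiset $S \subseteq \mcG^n$ of size $O((n/\alpha)^2)$ that is an $\alpha$-biased space in $\mcG^n$. Since the algorithm of Lemma~\ref{corollary_gp_spaces} is already in $\NC^2$ and we invoke it only once, the overall procedure remains in $\NC^2$, and the size bound on $S$ transfers verbatim.

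Next I would apply Lemma~\ref{lemma_e_bias_exp}, which asserts that a symmetric multiset $S \subseteq \mcG^n$ is $\alpha$-biased if and only if $C(\mcG^n, S)$ is an $(|\mcG|^n, |S|, \alpha)$-graph. Since $S$ is symmetric by construction and $\alpha$-biased by Lemma~\ref{corollary_gp_spaces}, the ``if'' direction of Lemma~\ref{lemma_e_bias_exp} immediately yields that $C(\mcG^n, S)$ is an $(|\mcG|^n, |S|, \alpha)$-graph, which is exactly the conclusion we need.

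There is essentially no obstacle here: the statement is just the composition of the two previously established lemmas, with the parameter identification $\varepsilon = \alpha$. The only thing to double-check is that the symmetry of $S$ (needed to even make sense of $C(\mcG^n, S)$ as an undirected graph, and to invoke Lemma~\ref{lemma_e_bias_exp}) is indeed guaranteed by Lemma~\ref{corollary_gp_spaces}, which it is, since that lemma explicitly produces a symmetric multiset.
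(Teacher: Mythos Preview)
Your proposal is correct and matches the paper's approach exactly: the paper presents this lemma as an immediate consequence of Lemmas~\ref{lemma_e_bias_exp} and~\ref{corollary_gp_spaces}, which is precisely the composition (with $\varepsilon = \alpha$) that you spell out.
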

Putting the above statement together with the results of Section
\ref{section_main_lemma}, we have the following.

\begin{theorem}\label{thm_rpp_abelian}
  For any constant $c>0$, the RPP over $\mcG$ has an $\NC^2$
  $(n/2,c\log n)$-algorithm.
\end{theorem}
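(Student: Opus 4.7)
The plan is to put together the three ingredients already in place in the preceding sections: the subgroup cover of $B(\mcH,c\log n)$ from Lemma \ref{lemma_cover_subgroups}, the small-bias/expander construction from Lemma \ref{corollary_exp_groups}, and the $\NC^2$ subgroup membership test from Lemma \ref{corollary_gp_membership}. Given input $(\mcH,r)$ with $\delta(\mcH)\leq n/2$ and $r\leq c\log n$, the algorithm will produce a candidate set $S\subseteq\mcG^n$ of polynomial size that provably contains a point $r$-far from $\mcH$, and then identify one such point by parallel membership testing.

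Phase 1: Invoke Lemma \ref{lemma_cover_subgroups} on $\mcH$ to obtain, in logspace (hence in $\NC^2$), generating sets for subgroups $\mcH_1,\ldots,\mcH_m\subseteq\mcG^n$ with $m\leq n^{10c}$ and $\delta(\mcH_i)\leq 2n/3$, satisfying $B(\mcH,c\log n)\subseteq\bigcup_i\mcH_i$. Phase 2: Invoke Lemma \ref{corollary_exp_groups} with $\alpha=n^{-20c}$ to compute, in $\NC^2$, a symmetric multiset $S\subseteq\mcG^n$ of size $O((n/\alpha)^2)=\poly(n)$ for which $C(\mcG^n,S)$ is an $(|\mcG|^n,|S|,\alpha)$-graph. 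Lemma \ref{lemma_exp_hitting_set} then guarantees the existence of some $s\in S$ with $s\notin\bigcup_i\mcH_i$, and any such $s$ is automatically $c\log n$-far from $\mcH$ by Phase 1.

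Phase 3: For every pair $(s,i)\in S\times[m]$, run the $\NC^2$ membership test from Lemma \ref{corollary_gp_membership} in parallel to decide whether $s\in\mcH_i$. Since $|S|\cdot m=\poly(n)$, all these tests are launched in parallel; aggregating their outcomes with an $\NC^1$ circuit selects the lexicographically first $s\in S$ that belongs to none of the $\mcH_i$. That element is returned as the $r$-far point.

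The only thing to verify is that composition really stays inside $\NC^2$. Phases 1 and 2 are independent of each other and each is $\NC^2$; Phase 3 consumes their outputs and performs $\poly(n)$ independent $\NC^2$ membership tests followed by an $\NC^1$ selection, which composes to $\NC^2$. The expected main obstacle is merely this bookkeeping check, i.e., confirming that none of the intermediate steps silently inflates the depth; the correctness itself is immediate from Lemmas \ref{lemma_cover_subgroups} and \ref{lemma_exp_hitting_set}, and the choice $\alpha=n^{-20c}$ is exactly tuned so that the union bound in Lemma \ref{lemma_exp_hitting_set} fires for the at most $n^{10c}$ subgroups produced in Phase 1.
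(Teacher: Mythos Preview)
Your proposal is correct and follows essentially the same route as the paper's own proof: cover $B(\mcH,c\log n)$ by $\poly(n)$ subgroups via Lemma~\ref{lemma_cover_subgroups}, build an expanding generator set $S$ via Lemma~\ref{corollary_exp_groups} with $\alpha$ of order $n^{-20c}$, invoke Lemma~\ref{lemma_exp_hitting_set} to guarantee a good $s\in S$, and locate it by parallel membership tests from Lemma~\ref{corollary_gp_membership}. The only cosmetic difference is that the paper takes $\alpha=\tfrac{1}{2n^{20c}}$ to get the strict inequality required in Lemma~\ref{lemma_exp_hitting_set}, whereas you take $\alpha=n^{-20c}$; you may want to shrink your $\alpha$ by a constant factor to match the strict hypothesis verbatim.
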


\begin{proof}
  Let $\mcH$ denote the input subgroup. By Lemma
  \ref{lemma_cover_subgroups}, there is a logspace (and hence $\NC^2$)
  algorithm that computes a collection of $m = n^{O(c)}$ many
  subgroups $\mcH_1,\mcH_2,\ldots,\mcH_m$ such that $B(\mcH,c\log
  n)\subseteq\bigcup_{i=1}^m\mcH_i$ and $\delta(\mcH_i)\leq 2n/3$ for
  each $i\in [m]$. Now, fix any multiset $S\subseteq\mcG^n$ such that
  the Cayley graph $C(\mcG^n, S)$ is a $(|\mcG|^n, |S|,\alpha)$-graph,
  where $\alpha = \frac{1}{2n^{20c}}$; by Lemma
  \ref{corollary_exp_groups}, such an $S$ can be constructed in
	$\NC^2$. It follows from Lemma \ref{lemma_exp_hitting_set} that
	there is some $s\in S$ such that $s\notin\bigcup_{i=1}^m\mcH_i$.
	Finally, by Lemma \ref{corollary_gp_membership}, there is an $\NC^2$
	algorithm to test if each $s\in S$ belongs to $\mcH_i$, for any
	$i\in[m]$. Hence, we can find out (in parallel) exactly which $s\in
	S$ do not belong to any of the $\mcH_i$ and output one of them. The
	output element $s$ is surely $c\log n$-far from $\mcH$.
\end{proof}


Let $\mcG$ be Abelian. We observe that a method of \cite{APY09},
coupled with Theorem~\ref{thm_rpp_abelian}, yields an efficient
$(k,\frac{cn\log k}{k})$-algorithm for any constant $c > 0$, and
$k\leq n/2$.

\begin{theorem}\label{thm_arb_k_abelian}
  Let $c > 0$ be any constant. If $\mcG$ is an Abelian group, then the
  RPP over $\mcG$ has an $\NC^2$ $(k,\frac{cn\log k}{k})$-algorithm
  for any $k\leq n/2$.
\end{theorem}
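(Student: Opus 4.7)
The plan is to reduce the general $(k, \frac{cn\log k}{k})$ case to the $(k', c'\log k')$ case already handled by Theorem~\ref{thm_rpp_abelian}, via a block-decomposition of the coordinate set $[n]$, following the idea sketched in \cite{APY09}. Partition $[n]$ into $\ell = \lfloor n/(2k) \rfloor$ blocks $B_1,\ldots,B_\ell$, each of size at most $2k$. For each block $B_j$, let $\pi_j:\mcG^n \to \mcG^{B_j}$ be the natural projection onto the coordinates in $B_j$, and let $\mcH_j = \pi_j(\mcH)$; this is a subgroup of $\mcG^{B_j} \cong \mcG^{2k}$ of dimension at most $\delta(\mcH) \le k$, i.e., at most half the ambient dimension. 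A generating set for $\mcH_j$ is obtained in logspace by simply projecting the given generators of $\mcH$ onto the coordinates in $B_j$.

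Next, I would invoke Theorem~\ref{thm_rpp_abelian} in parallel on each block: for a suitably chosen constant $c'$ (with $c' \ge 2c$), the $\NC^2$ algorithm of Theorem~\ref{thm_rpp_abelian} applied to $(\mcH_j, c'\log(2k))$ produces some $y^{(j)}\in \mcG^{B_j}$ with $\Delta(y^{(j)}, \mcH_j) > c'\log(2k)$. Concatenate these block solutions into a single $x \in \mcG^n$ by setting $x|_{B_j} = y^{(j)}$ for each $j$, and output $x$. All $\ell \le n/(2k)$ invocations of the subroutine run independently, so the whole procedure is still in $\NC^2$.

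It remains to verify correctness. For any $h \in \mcH$ and any block $B_j$, the projection $\pi_j(h)$ lies in $\mcH_j$, and hence the restriction of $x - h$ (in the group sense, $x|_{B_j}\pi_j(h)^{-1}$) has weight strictly greater than $c'\log(2k)$ on $B_j$. Summing over the $\ell$ blocks,
\[
\Delta(x, h) \;\ge\; \ell \cdot c'\log(2k) \;\ge\; \frac{n}{2k} \cdot c'\log(2k) \;\ge\; \frac{cn\log k}{k},
\]
for large enough $n$ and an appropriate choice of $c'$ depending on $c$. Since this holds for every $h \in \mcH$, the output $x$ is $\frac{cn\log k}{k}$-far from $\mcH$, as required.

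I do not expect any serious obstacle: the block decomposition is structural, the projection onto a block trivially shrinks the dimension of a generating set, and the parameters match cleanly because each block contains a subgroup of relative dimension $\le 1/2$, which is exactly the regime of Theorem~\ref{thm_rpp_abelian}. The only mild point to watch is to choose $c'$ sufficiently large (as a function of $c$) so that $c' \log(2k) \cdot \lfloor n/(2k) \rfloor$ beats $\frac{cn \log k}{k}$ for all $k \le n/2$, including small $k$; absorbing a constant factor into $c'$ takes care of this uniformly.
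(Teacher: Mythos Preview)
Your approach is essentially identical to the paper's: partition $[n]$ into $\Theta(n/k)$ blocks of size $\Theta(k)$, project $\mcH$ onto each block, invoke Theorem~\ref{thm_rpp_abelian} in parallel on each, and concatenate. The only difference is cosmetic: the paper takes each block to have size in $[2k,4k)$ (rather than ``at most $2k$''), which both handles the divisibility issue when $2k\nmid n$ and guarantees $\delta(\mcH_j)\le |T_j|/2$ so that the hypothesis of Theorem~\ref{thm_rpp_abelian} is met---note that ``at most $2k$'' would not ensure this, so you really want ``at least $2k$.''
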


\begin{proof}
  Given as input a subgroup $\mcH$ such that $\delta(\mcH)=k\leq n/2$,
  the algorithm partitions $[n]$ as $[n] = \bigcup_{i=1}^m T_i$, where
  $2k\leq |T_i| < 4k$ for each $i$; note that $m\geq
  n/4k$. Let $\mcH_i$ denote the subgroup obtained when $\mcH$ is
  projected onto the coordinates in $T_i$. Since $\delta(\mcH_i)\leq
  k\leq |T_i|/2$, we can, by Theorem \ref{thm_rpp_abelian},
  efficiently find a point $x_i\in\mcG^{|T_i|}$ that is at least
  $4c\log k$-far from $\mcH_i$.  Putting these $x_i$ together in the
  natural way, we obtain an $x\in\mcG^n$ that is $\frac{cn\log
    k}{k}$-far from the subgroup $\mcH$.

  Since $\mcG$ is Abelian, using the algorithm of
  Theorem~\ref{thm_rpp_abelian}, the $x_i$ can all be computed in
  parallel in $\NC^2$. Hence, the entire procedure can be performed in
  $\NC^2$.
\end{proof}

\section{RPP over General Groups}

Let $\mcG$ denote some fixed finite group. We can generalize the
polynomial-time algorithm of \cite{APY09}, described for $\F_2$, to
compute a point $x\in\mcG^n$ that is $c\log n$-far from a given input
subgroup $\mcH$ such that $\delta(\mcH)\leq n/2$. We only state this
result below and refer the interested reader to the full version
\cite{AS09b} for details.

\begin{theorem}\label{thm_rpp_non_abelian}
  For any constant $c > 0$, the RPP over $\mcG$ has a polynomial time
  $(n/2, c\log n)$-algorithm.
\end{theorem}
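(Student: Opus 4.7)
The plan is to generalize the pessimistic estimator approach of \cite{APY09} to the group-theoretic setting, using Lemma \ref{lemma_cover_subgroups} for the ``covering'' step and Lemma \ref{corollary_affine_xn} to evaluate the estimator efficiently. Unlike the Abelian case, we do not have efficient parallel membership testing or small-bias space constructions for arbitrary $\mcG^n$, so we fall back on a sequential greedy procedure.

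First, I apply Lemma \ref{lemma_cover_subgroups} to the input subgroup $\mcH$ to obtain subgroups $\mcH_1,\ldots,\mcH_m$ of $\mcG^n$ with $m\leq n^{10c}$ and $\delta(\mcH_i)\leq 2n/3$ for each $i$, such that $B(\mcH,c\log n)\subseteq\bigcup_i \mcH_i$. It then suffices to find $x\in\mcG^n$ that lies outside $\bigcup_i\mcH_i$. I define the pessimistic estimator, for any partial assignment $y=(y_1,\ldots,y_j)\in\mcG^j$ with $0\leq j\leq n$,
\[
\Phi(y) \;=\; \sum_{i=1}^m \frac{|(\mcH_i)_y|}{|\mcG|^{n-j}}.
\]
Note that $\Phi(y)$ equals the probability that a uniformly random extension of $y$ to $\mcG^n$ lies in $\bigcup_i\mcH_i$ (via the union bound). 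The initial value satisfies
\[
\Phi(\emptyset)\;\leq\;\sum_{i=1}^m \frac{|\mcG|^{2n/3}}{|\mcG|^n}\;\leq\;\frac{n^{10c}}{|\mcG|^{n/3}}\;<\;1
\]
for large enough $n$. Also, since $(\mcH_i)_y = \bigsqcup_{g\in\mcG}(\mcH_i)_{(y,g)}$ (partitioning by the value of the next coordinate), we have the averaging identity
\[
\Phi(y)\;=\;\frac{1}{|\mcG|}\sum_{g\in\mcG}\Phi((y,g)),
\]
so at every stage there is some $g\in\mcG$ with $\Phi((y,g))\leq\Phi(y)$.

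The algorithm therefore proceeds greedily: starting from $y=\emptyset$, at each step $j=0,1,\ldots,n-1$, it enumerates the $|\mcG|$ choices of the next coordinate $g\in\mcG$, evaluates $\Phi((y,g))$, and appends the $g$ that minimizes this quantity. Each evaluation requires computing $|(\mcH_i)_{(y,g)}|$ for each $i\in[m]$, and this is exactly what Lemma \ref{corollary_affine_xn} provides in deterministic polynomial time. After $n$ rounds we obtain $x\in\mcG^n$ with $\Phi(x)\leq\Phi(\emptyset)<1$; but $\Phi(x)=\sum_i \mathbf{1}[x\in\mcH_i]$ is a nonnegative integer, hence $\Phi(x)=0$, i.e.\ $x\notin\bigcup_i\mcH_i\supseteq B(\mcH,c\log n)$. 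The overall running time is polynomial: the covering subgroups are produced in logspace by Lemma \ref{lemma_cover_subgroups}, and the greedy loop performs $n\cdot|\mcG|\cdot m$ invocations of the polynomial-time algorithm of Lemma \ref{corollary_affine_xn}.

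The only conceptually non-trivial point is the averaging identity for $\Phi$, and this presents no obstacle in the non-Abelian setting because it depends purely on the coordinate-wise decomposition $(\mcH_i)_y=\bigsqcup_g(\mcH_i)_{(y,g)}$ rather than on any additive or commutative structure. The computational work is absorbed into the Schreier--Sims machinery, which is the reason the argument does not parallelize: membership testing and order computation for general (non-Abelian) permutation groups are not known to be in $\NC$, which is precisely why we settle for a polynomial-time algorithm here rather than an $\NC^2$ one.
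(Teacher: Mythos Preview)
Your proposal is correct and follows exactly the approach the paper intends: the paper explicitly says it generalizes the pessimistic-estimator algorithm of \cite{APY09} to arbitrary $\mcG$ and has set up Lemma~\ref{corollary_affine_xn} precisely to make the estimator $\Phi$ efficiently computable, while Lemma~\ref{lemma_cover_subgroups} supplies the covering step. One small wording fix: $\Phi(y)$ is an \emph{upper bound} on the probability that a random extension of $y$ lands in $\bigcup_i\mcH_i$ (by the union bound), not equal to it; this is harmless since the argument only uses $\Phi(\emptyset)<1$, the averaging identity, and that $\Phi(x)$ is a nonnegative integer when $|y|=n$.
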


Analogous to Theorem~\ref{thm_arb_k_abelian}, we have the following
solution to RPP for general groups.

\begin{theorem}\label{thm_arb_k}
  Let $c>0$ be any constant.  For any $\mcG$, the RPP over $\mcG$ has
  a polynomial time $(k,\frac{cn\log k}{k})$-algorithm for any $k\leq
  n/2$.
\end{theorem}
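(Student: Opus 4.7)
The plan is to mimic the reduction used in the proof of Theorem~\ref{thm_arb_k_abelian} essentially verbatim, substituting the polynomial-time base algorithm of Theorem~\ref{thm_rpp_non_abelian} for the Abelian $\NC^2$ base algorithm of Theorem~\ref{thm_rpp_abelian}. The crucial observation is that the divide-and-conquer step that boosts the $r=c\log n$ bound to the $r=cn\log k/k$ bound is purely combinatorial: it uses only coordinate projections of $\mcG^n$ onto subproducts, which are group homomorphisms regardless of whether $\mcG$ is Abelian.

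Concretely, on input $(\mcH, r)$ with $k := \delta(\mcH) \leq n/2$, I would first partition $[n]$ into disjoint blocks $T_1, T_2, \ldots, T_m$ with $2k \leq |T_i| < 4k$, so that $m \geq n/(4k)$. For each $i$, let $\pi_i : \mcG^n \to \mcG^{|T_i|}$ be the projection onto the $T_i$-coordinates and put $\mcH_i := \pi_i(\mcH)$; this is a subgroup of $\mcG^{|T_i|}$ with $\delta(\mcH_i) \leq \delta(\mcH) \leq k \leq |T_i|/2$. Applying Theorem~\ref{thm_rpp_non_abelian} to $(\mcH_i, 4c\log k)$ (with an appropriate absolute constant, noting $\log|T_i| \geq \log k$) yields in polynomial time an element $x_i \in \mcG^{|T_i|}$ that is strictly more than $4c\log k$-far from $\mcH_i$. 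Gluing the $x_i$ into a single $x \in \mcG^n$, I would then observe that for every $h \in \mcH$ the restriction $h|_{T_i}$ lies in $\mcH_i$, so
\[
\Delta(x, h) \;=\; \sum_{i=1}^m \Delta(x_i, h|_{T_i}) \;>\; m \cdot 4c\log k \;\geq\; \frac{n}{4k} \cdot 4c\log k \;=\; \frac{cn\log k}{k},
\]
which gives $\Delta(x, \mcH) > cn\log k/k$ as required.

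I do not anticipate any real obstacle. The partitioning and glueing are trivial, the projection $\mcH_i$ is computed by simply restricting generators of $\mcH$ to their $T_i$-coordinates, and each of the $m \leq n/(2k) \leq n$ base calls costs polynomial time by Theorem~\ref{thm_rpp_non_abelian}. The one point where the non-Abelian setting could have caused trouble is the parallelization used in the Abelian version (since we invoke $m$ independent instances), but the statement asks only for polynomial time rather than $\NC^2$, so running the base instances sequentially is perfectly acceptable.
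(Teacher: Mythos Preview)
Your proposal is correct and is essentially identical to the paper's own proof: the paper explicitly says the construction is the same as in Theorem~\ref{thm_arb_k_abelian}, with Theorem~\ref{thm_rpp_non_abelian} substituted for the Abelian base case, and that the $x_i$ are then found sequentially in polynomial time. Your write-up spells out the distance calculation and the non-Abelian-safe use of coordinate projections a bit more explicitly than the paper does, but the argument is the same.
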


\begin{proof}
  The construction is exactly the same as in the proof of
  Theorem~\ref{thm_arb_k_abelian}. The only difference is that we will
  apply the algorithm of Theorem \ref{thm_rpp_non_abelian}. In this
  case, the $x_i$ can all be found in deterministic polynomial time.
  Hence, the entire procedure gives us a polynomial-time algorithm.
\end{proof}

\section{Limitations of expanding sets}

In the previous sections, we have shown how generators for expanding
Cayley graphs on $\mcG^n$, where $\mcG$ is a fixed finite group, can
help solve the RPP over $\mcG$. In particular, we have the following
easy consequence of Lemmas \ref{lemma_cover_subgroups} and
\ref{lemma_exp_hitting_set}.

\begin{corollary}\label{corollary_exp_hitting_set}
  For any constant $c > 0$, large enough $n$, and any symmetric multiset
  $S\subseteq\mcG^n$ such that $\lambda(C(\mcG^n,S)) <
  \frac{1}{n^{20c}}$, the following holds. If $\mcH$ is any subgroup
  of $\mcG^n$ such that $\delta(\mcH)\leq n/2$, there is some $s\in S$
  such that $s\notin B(\mcH,c\log n)$.
\end{corollary}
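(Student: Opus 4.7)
The plan is to chain together the two lemmas referenced in the statement, exactly as foreshadowed. Given any subgroup $\mcH$ of $\mcG^n$ with $\delta(\mcH)\leq n/2$, I would first invoke Lemma \ref{lemma_cover_subgroups} to produce a collection of at most $m\leq n^{10c}$ subgroups $\mcH_1,\mcH_2,\ldots,\mcH_m$, each of dimension at most $2n/3$, whose union contains $B(\mcH,c\log n)$. This reduces the task of finding an element of $S$ outside $B(\mcH,c\log n)$ to the (a priori weaker) task of finding an element of $S$ that avoids the union $\bigcup_{i=1}^m \mcH_i$.

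Next, I would feed this collection into Lemma \ref{lemma_exp_hitting_set}. The hypotheses align: the number of subgroups is $m\leq n^{10c}$, each satisfies $\delta(\mcH_i)\leq 2n/3$, and $S$ is a symmetric multiset for which $\lambda(C(\mcG^n,S)) < 1/n^{20c}$. The lemma then guarantees the existence of some $s\in S$ with $s\notin\bigcup_{i=1}^m\mcH_i$. Since $B(\mcH,c\log n)\subseteq\bigcup_{i=1}^m\mcH_i$, this same $s$ satisfies $s\notin B(\mcH,c\log n)$, which is exactly the conclusion required.

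There is essentially no obstacle in this argument beyond checking that the quantitative parameters line up: the $n^{10c}$ bound on the number of covering subgroups in Lemma \ref{lemma_cover_subgroups} must be compatible with the spectral threshold $1/n^{20c}$ assumed on $S$, and indeed the statement of Lemma \ref{lemma_exp_hitting_set} was set up with precisely this matching in mind. So the proof is a two-line deduction, and the only thing I would be careful about is explicitly noting the inclusion $B(\mcH,c\log n)\subseteq\bigcup_i\mcH_i$ when transferring the conclusion of Lemma \ref{lemma_exp_hitting_set} back to the original statement.
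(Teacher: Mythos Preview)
Your proposal is correct and matches the paper's own treatment exactly: the paper presents this corollary as an immediate consequence of Lemmas~\ref{lemma_cover_subgroups} and~\ref{lemma_exp_hitting_set} without giving any further proof, and your two-step chaining is precisely that deduction.
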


It makes sense to ask if the parameters in Corollary
\ref{corollary_exp_hitting_set} are far from optimal. Is it true that
any polynomial-sized symmetric multiset $S\subseteq\mcG^n$ with good
enough expansion properties is $\omega(\log n)$-far from every
subgroup of dimension at most $n/2$? We can show that this is not true.
Formally, we can prove:

\begin{theorem}\label{thm_exp_limits}
  For any constant $c>0$ and large enough $n$, there is a symmetric
  multiset $S\subseteq\field_2^n$ such that $\lambda(C(\field_2^n,
  S))\leq \frac{1}{n^c}$ but there is a subspace $L$ of dimension
  $n/2$ such that $S\subseteq B(L,20c\log n)$.
\end{theorem}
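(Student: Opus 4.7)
The plan is to build $S$ as a multiset sum $S_1 + E$, where $S_1 \subseteq L$ is an $\varepsilon$-biased space and $E \subseteq B(r)$ is a low-weight multiset chosen to kill the characters of $\F_2^n$ that are trivial on $L$ (i.e.\ the characters indexed by $L^\perp$). Concretely I set $S := \{s_1+e : s_1 \in S_1,\ e \in E\}$ as a multiset so that, under the uniform distribution on $S$, the components $s_1 \in S_1$ and $e \in E$ are independent. With $r := 20c\log n$, the containment $S \subseteq B(L, r)$ is then free, since $s_1$ itself witnesses $\Delta(s_1+e, L) \leq \mathrm{wt}(e) \leq r$; symmetry is also free in $\F_2^n$, since every element is its own inverse.

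The first nontrivial step is to pick $L$ with a good dual weight distribution. I take $L$ to be a uniformly random subspace of $\F_2^n$ of dimension $n/2$. A first-moment calculation, using that $\Pr[y \in L^\perp] \approx 2^{-n/2}$ for any fixed non-zero $y$, shows that with probability $1-o(1)$ every non-zero $y \in L^\perp$ has weight at least $\alpha n$ for some fixed $\alpha > 0$ with $H(\alpha) < 1/2$ (e.g.\ $\alpha = 0.1$), since the expected number of vectors of weight at most $\alpha n$ in $L^\perp$ is bounded by $2^{(H(\alpha)-1/2)n} = o(1)$. Fix such an $L$. For $S_1$, take any standard explicit $\varepsilon$-biased space in $L \cong \F_2^{n/2}$ of polynomial size, with $\varepsilon := 1/(2n^c)$.

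For $E$, sample $N = \poly{n}$ i.i.d.\ vectors from the product Bernoulli distribution $\mathrm{Ber}(p)^n$ with $p := r/(2n)$, and discard those of Hamming weight greater than $r$. For any $y \in L^\perp \setminus\{0\}$, the per-sample bias equals $(1-2p)^{\mathrm{wt}(y)} \leq (1-r/n)^{\alpha n} \leq e^{-\alpha r} \leq n^{-2c}$ by the weight lower bound on $L^\perp$. Applying Hoeffding's inequality per character, together with a union bound over the $2^{n/2}-1$ non-zero characters indexed by $L^\perp$, one finds that for $N$ a sufficiently large polynomial in $n$, the empirical bias of $E$ is at most $1/(2n^c)$ against every such character with positive probability; the discarding step perturbs these empirical means by only $n^{-\Omega(c)}$ by a Chernoff tail bound on $\mathrm{Bin}(n,p)$.

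To bound the overall bias of $S$, note that for any non-zero $y \in \F_2^n$ the independence of the two components gives $|\avg{s \in S}{\chi_y(s)}| = |\avg{s_1}{\chi_y(s_1)}| \cdot |\avg{e}{\chi_y(e)}|$. If $y \in L^\perp\setminus\{0\}$, the first factor is $1$ and the second is $\leq 1/(2n^c)$. If $y \notin L^\perp$, then $\chi_y$ restricted to $L$ is a nontrivial character of $L$, so the first factor is $\leq \varepsilon$ and the second is at most $1$. Either way the product is at most $1/n^c$, so $S$ is $(1/n^c)$-biased, and hence $\lambda(C(\F_2^n, S)) \leq 1/n^c$ by Lemma \ref{lemma_e_bias_exp}. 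The main delicate point, which I would want to verify most carefully, is the dual weight requirement on $L$: the entire construction hinges on a constant-fraction lower bound on the minimum weight of non-zero elements of $L^\perp$, and this is precisely what the condition $H(\alpha) < 1/2$ buys us; without such a lower bound, the Bernoulli noise would fail to suppress the $L^\perp$-characters.
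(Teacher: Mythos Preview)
The paper does not actually give a proof of this theorem in the present version; it defers to the full version \cite{AS09b}. So there is nothing here to compare your argument against line by line.

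That said, your construction is correct and is a natural way to prove the statement. The main points all check out: (i) for a random $(n/2)$-dimensional $L$, the expected number of nonzero $y\in L^\perp$ with $\mathrm{wt}(y)\le \alpha n$ is at most $2^{(H(\alpha)-1/2)n}=o(1)$ when $H(\alpha)<1/2$, so such an $L$ exists; (ii) for $e\sim\mathrm{Ber}(p)^n$ with $p=r/(2n)$ and any $y\in L^\perp\setminus\{0\}$ one has $\mathbb{E}[\chi_y(e)]=(1-2p)^{\mathrm{wt}(y)}\le e^{-\alpha r}\le n^{-2c}$, and Hoeffding together with a union bound over $2^{n/2}$ characters requires only $N=O(n^{2c+1})$ samples, which is polynomial; (iii) the fraction of samples of weight exceeding $r=2np$ is $e^{-\Omega(np)}=n^{-\Omega(c)}$ by a standard Chernoff bound, so discarding them perturbs each empirical bias by $O(n^{-\Omega(c)})$; (iv) the factorization $\widehat{\mu_S}(y)=\widehat{\mu_{S_1}}(y)\cdot\widehat{\mu_E}(y)$ together with the case split $y\in L^\perp$ versus $y\notin L^\perp$ gives $|\widehat{\mu_S}(y)|\le 1/n^c$ for all nonzero $y$, whence $\lambda(C(\F_2^n,S))\le 1/n^c$ by Lemma~\ref{lemma_e_bias_exp}. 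Finally, $|S|=|S_1|\cdot|E|$ is polynomial in $n$, consistent with the discussion following the theorem, and symmetry and the containment $S\subseteq B(L,r)$ are immediate. Your closing remark is also on target: the whole argument hinges on the $\Omega(n)$ minimum weight of $L^\perp$, which is exactly what enables low-weight noise to suppress the characters that $S_1\subseteq L$ cannot.
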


It is well known that for any family of $d$-regular multigraphs $G$
$\lambda(G)=\Omega(1/\sqrt{d})$ (see e.g.\ \cite[Theorem 5.3]{HLW06}).
As a consequence of this lower bound it follows for any fixed group
$\mcG$ and any multiset $S\subseteq\mcG^n$ that $\lambda(C(\mcG,S))=
\Omega(1/\sqrt{|S|})$. Hence, the above theorem tells us that just
the expansion properties of $C(\field_2^n,S)$ for any $\poly{n}$-sized
$S$ are not sufficient to guarantee $\omega(\log n)$-distance from
every subspace of dimension $n/2$. The proof of the above statement
can be found in the full version \cite{AS09b}.
\section{Discussion}
For the remote point problem over an Abelian group $\mcG$, we have
shown how expanding generating sets for Cayley graphs of $\mcG^n$ can
be used to obtain deterministic $\NC^2$ algorithms. A natural question
is whether we can obtain a similar algorithm for non-Abelian $\mcG$.
Note that Lemma \ref{lemma_exp_hitting_set} holds in the non-Abelian
setting too. Hence, in order to obtain an $\NC^2$-algorithm for the
RPP over arbitrary non-Abelian $\mcG$ along the lines of our algorithm
for Abelian groups, we need to be able to check (in $\NC^2$) for
membership in $\mcG^n$, and we need to be able to construct small
multisets $S$ of $\mcG^n$ such that $C(\mcG^n,S)$ has sufficiently
good expansion properties. Luks' work \cite{Lu86} yields an $\NC^4$
test for membership in $\mcG^n$ for arbitrary $\mcG$.  Building on
that, there is also an $\NC^2$ membership test for $\mcG^n$
\cite{AKV05}. However, we are unable to compute a (good enough)
expanding generator set for the group $\mcG^n$ in deterministic $\NC$
or even in deterministic polynomial time.

\section*{Acknowledgements}
We are grateful to Noga Alon and Sergey Yekhanin for interesting
comments. In particular, Alon pointed out to us that
Lemma~\ref{lemma_exp_set_subgps} has an alternative proof using the
expander mixing lemma. We thank the anonymous referees for their
comments and suggestions.

\end{document}